\def\iindex{}
\theoremstyle{plain}
\newtheorem{thm}{Theorem}[section]
\newtheorem{lem}[thm]{Lemma}
\newtheorem{cor}[thm]{Corollary}
\newtheorem{prop}[thm]{Proposition}
\theoremstyle{definition}
\newtheorem{defi}[thm]{Definition}
\theoremstyle{remark}
\newtheorem{exmp}[thm]{Example}
 \DeclareMathOperator*{\esssup}{esssup}
\DeclareMathOperator*{\essinf}{essinf}
\begin{document}

\title{A First-Order BSPDE for Swing Option Pricing\index{Existence}}

\author{Christian Bender$^1$, Nikolai Dokuchaev$^2$}

 \maketitle
\footnotetext[1]{Saarland University, Department of Mathematics,
Postfach 151150, D-66041 Saarbr\"ucken, Germany, {\tt
bender@math.uni-sb.de}.} \footnotetext[2]{       Department of
Mathematics \& Statistics, Curtin University, GPO Box U1987, Perth,
6845 Western Australia, Australia, {\tt N.Dokuchaev@curtin.edu.au }}

\begin{abstract}
We study an optimal control problem related to swing option pricing in a general non-Markovian  setting in continuous time. As a main result we
show that the value process solves a first-order non-linear backward stochastic partial differential equation. Based on this result we can characterize the set of optimal controls and
derive a dual minimization problem.
\\[0.2cm] \emph{Keywords:} Backward SPDE, stochastic optimal control, swing options.
\\[0.2cm] \emph{AMS classification:} 60H15; 49L20; 91G20.
\end{abstract}

\section{Introduction}

In a swing option contract, the holder of the option can buy some volume of a commodity, say electricity, for a fixed strike price during the lifetime of the option. There are typically local constraints on how much volume can be
exercised at a given time, and global constraints on the total volume. Swing options are particularly popular in electricity markets, and can be used to hedge against
the risk of  fluctuating demand, see \citet{CL}.

Mathematically, the pricing problem of such a swing option leads to
optimal control problems, whose formulation varies depending on the
way the constraints are formulated. On the one hand, the constraints
can be formulated discretely in the following sense: The total
volume must be exercised in form of a finite number of packages.
Local constraints prescribe how many packages can at most be
exercised at a given time and refraction periods are imposed to
enforce a minimal waiting time after one package is exercised. This
formulation leads to multiple stopping problems and was studied in
discrete time e.g. by \citet{JRT}, \citet{MH}, \citet{Be1},
\citet{Sch}, and \citet{BSZ} and in continuous time by \citet{CT},
\citet{ZM}, and \citet{Be2}. On the other hand, constraints can be
imposed on the rate at which the option is exercised. This approach
leads to a continuous time optimal control problem as stated by
\citet{Ke04} in a general framework and studied by \citet{BLM} in a
diffusion setting; see also the related work of \citet{Do2013} for a
more general notion of controlled options. Related discrete time
optimal control formulations for swing option pricing can be found
e.g. in \citet{Bal}, \citet{Ba}.

In the present paper we adopt the second approach and formulate the local constraint in continuous time in terms of the rate of exercising. Suppose an adapted process $X(t)$ denotes the discounted payoff of
the option, if one unit volume is exercised at time $t$. In the case of swing option pricing we can set $X(t)=e^{-\rho t}(S(t)-K)_+$, where $S$ is the electricity price process, $K$ is the strike price,
$\rho$ is the interest rate, and $(x)_+$ stands for the positive part of $x$. Then, we consider the following  control problem
$$
\bar J(t, y):=\esssup_{u} E\left[\left.\int_t^T u(s)X(s)ds
\right|\mathcal{F}_t\right],
$$
where the supremum is taken over the set of adapted processes with
values in $[0,L]$ which satisfy $\int_t^T u(s)ds \leq 1-y$. Here, a
local constraint restricts the rate at which the option can be
exercised to the interval $[0,L]$, while the global constraint
imposes that the maximal volume which can be exercised in the
remaining time from $t$ to $T$ is $1-y$. Then $\bar J(t,y)$ is a
discounted fair price of the swing option contract, if the
expectation is taken with respect to a risk-neutral pricing measure
under which all tradable and storable basic securities in the market
are $\sigma$-martingales.

As the main result of this paper we will show that a `good' version $(J(t,y), \;t\in[0,T],\; y\in(-\infty,1])$ of the adapted random field $(\bar J(t,y), \;t\in[0,T],\; y\in(-\infty,1])$ satisfies the following
first order backward stochastic partial differential equation (BSPDE) in $(t,y)$:
\begin{eqnarray}\label{BSPDE}
 J(t,y)&=&E\left[\left. L \int_t^T (X(s)+D^-_yJ(s,y))_+ds\right|\mathcal{F}_t\right], \nonumber \\
J(T,y)&=&0,\quad J(t,1)=0.
\end{eqnarray}
Here $D^-_yJ$ denotes the left-hand side derivative of $J$ in $y$
and it can be replaced by the right-hand side derivative $D^+_yJ$ in
the above equation. This result will be obtained under the weak
assumptions that $X$ is right-continuous, nonnegative, adapted, and
satisfies some integrability condition. We will also show that
 under these assumptions
$J$ is smooth enough to apply a  variant of a chain rule, which is sufficient to show that a control $u$ is optimal, if and only if
\begin{equation*}
 u(s) \in \left\{\begin{array}{cl} \{0\}, & X(s)+D_y^-J(s,y+\int_t^s  u(r)dr)<0\\ \{L\}, & X(s)+D_y^-J(s,y+\int_t^s u(r)dr)>0 \\
 \ [0,L], &
X(s)+D_y^-J(s,y+\int_t^s  u(r)dr)=0. \end{array}\right.
\end{equation*}
We finally derive a dual minimization problem for $\bar J(t,y)$ in
terms of martingales. This type of dual formulations has its origin
in the pricing problem of American options, see \citet{Ro} and
\citet{HK}, and was later generalized to a pure martingale dual for
multiple exercise options by \citet{Sch} in discrete time and
\citet{Be2} in continuous time. Our dual representation can be seen
as a continuous time version of general dual formulations for
discrete time control problems in \citet{BSS}, \citet{Ro2}, and
\citet{GHW}.

We note that a connection between backward SPDEs and dynamic
programming for a class of non-Markovian control problems was first
studied by \citet{Pe1992}. As in most of the existing  literature
for backward SPDEs he considers parabolic type second order
equations such that the matrix of the higher order coefficients is
positive definite. We also note that some additional conditions on
the coercivity are usually imposed in the literature; see, e.g.,
condition \iindex{(0.4) in \citet{Roz}, Ch. 4}. Without these
conditions, a parabolic type SPDE is regarded as degenerate. For the
degenerate backward SPDEs in the whole space, i.e., without
boundaries, regularity results were obtained in \citet{Roz},
\citet{MaYong}, \citet{Hu}, and more recently by \citet{DTZ} and
\citet{DZ}. The methods developed in these works cannot be applied
in the case of a domain with boundary  because of regularity issues
that prevent using an approximation of the differential operator  by
a non-degenerate one. It turns out that the theory of degenerate
SPDEs in domains is much harder than in the whole space and was, to
the best of our knowledge, not addressed yet in the existing
literature.
 The present paper consider  a problem of
this kind.  We introduce and prove existence for a first order BSPDE
in a domain with boundary. This equation can be interpreted as  a
limit case of a degenerate second order parabolic BSPDE.

The paper is organized as follows: In Section 2 we set the problem and derive some
basic properties of the control problem, including the existence of optimal controls and the construction of the good version $J(t,y)$. In Section 3 we study the marginal values
$-D_y^{\pm} J(t,y)$. It turns out that the left-hand side derivative $D_y^{-}J(t,y)$ in general is a submartingale with rightcontinuous paths, while the right-hand side derivative $D_y^{+}J(t,y)$ may admit discontinuities
from the right. For this reason it is more convenient to work with the left-hand side derivative in most of the proofs. The proof of the main result, namely that $J$ solves the first-order backward
stochastic partial differential equation (\ref{BSPDE}) is given in Section 4. Finally, the characterization of optimal strategies and the dual formulation are presented in Sections 5 and 6.
Uniqueness results for the BSPDE (\ref{BSPDE}) and smoothness of the value process $J(t,y)$ will be discussed in a companion paper, which is in preparation.

\section{Some basic properties of the control problem}

Throughout this paper we assume that $(\Omega, \mathcal{F},
\mathbb{F}, P)$ is a filtered probability space satisfying the usual
conditions and that $(X(t),\; 0\leq t \leq T)$ is a nonnegative,
rightcontinuous, $\mathbb{F}$-adapted stochastic process which
fulfills
\begin{equation}\label{ass:square}
 E[\sup_{0\leq t \leq T} X(t)^p]<\infty
\end{equation}
for some $p>1$.

We consider the following optimization problem: An investor can exercise the cash-flow $X$ continuously, but she is subjected to the constraint
that the rate at which she exercises is bounded by a constant $L>0$, which is fixed from now on. Moreover the maximal total volume of exercise
is bounded by $1$. The investor's aim is to maximize the expected reward, i.e. she wishes to maximize
$$
E\left[\int_0^T u(s)X(s)ds\right]
$$
over all $\mathbb{F}$-adapted processes with values in $[0,L]$ which satisfy $\int_0^T u(s)ds \leq 1$.

In order to study this problem in the general setting introduced
above we consider a dynamic version. For any $[0,T]$-valued stopping
time $\tau$ and $\mathcal{F}_\tau$-measurable $(-\infty,1]$-valued
random variable $Y$ denote by $U(\tau,Y)$ the set of all
$\mathbb{F}$-adapted processes with values in $[0,L]$ such that
$\int_\tau^T u(s) ds \leq 1-Y$. Hence, the investor \iindex{enters}
at time $\tau$ and can spend a total volume of $1-Y$. The
corresponding value of the optimization problem is
$$
\bar J(\tau, Y):=\esssup_{u \in U(\tau,Y)} E\left[\left.\int_\tau^T u(s)X(s)ds \right|\mathcal{F}_\tau\right]
$$
As explained in the introduction, the main result of this paper is that a `good' version $(J(t,y), \;t\in[0,T],\; y\in(-\infty,1])$ of the adapted random field $(\bar J(t,y), \;t\in[0,T],\; y\in(-\infty,1])$ solves
the first order backward stochastic partial differential equation (\ref{BSPDE}).

Before we can prove this result, we need to derive
some basic properties of the corresponding control problem. We first establish existence of optimal controls.

\begin{prop}\label{prop:optimalcontrol}
 For every pair $(\tau,Y)$, where $\tau$ is a stopping time and $Y$ is an $(-\infty,1]$-valued $\mathcal{F}_\tau$-measurable random variable,
there is an optimal control $\bar u \in U(\tau,Y)$, i.e.
$$
\bar J(\tau,Y)=E[\int_\tau^T \bar u(r)X(r)dr|\mathcal{F}_\tau]
$$
\end{prop}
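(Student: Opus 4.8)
The plan is to run a weak compactness argument over a maximizing sequence of controls. First I would replace the essential supremum by a single nondecreasing sequence. After replacing any $u\in U(\tau,Y)$ by $\mathbf 1_{\{s\geq\tau\}}u(s)$ (which affects neither the constraint nor the objective), one checks that $U(\tau,Y)$ is stable under pasting on $\mathcal F_\tau$-sets: for $u_1,u_2\in U(\tau,Y)$ and $A:=\{E[\int_\tau^T u_1X\,ds\mid\mathcal F_\tau]\geq E[\int_\tau^T u_2X\,ds\mid\mathcal F_\tau]\}\in\mathcal F_\tau$, the control $u_3:=\mathbf 1_A u_1+\mathbf 1_{A^c}u_2$ is again adapted and $[0,L]$-valued, satisfies $\int_\tau^T u_3\,ds=\mathbf 1_A\int_\tau^T u_1\,ds+\mathbf 1_{A^c}\int_\tau^T u_2\,ds\leq 1-Y$, and fulfils $E[\int_\tau^T u_3X\,ds\mid\mathcal F_\tau]=\max_i E[\int_\tau^T u_iX\,ds\mid\mathcal F_\tau]$. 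Hence $\{E[\int_\tau^T uX\,ds\mid\mathcal F_\tau]:u\in U(\tau,Y)\}$ is upward directed and by the standard properties of the essential supremum there is a sequence $(u_n)\subset U(\tau,Y)$ with $E[\int_\tau^T u_nX\,ds\mid\mathcal F_\tau]\uparrow\bar J(\tau,Y)$ a.s.; since $0\leq\bar J(\tau,Y)\leq E[L\int_0^T X\,ds\mid\mathcal F_\tau]$ and the dominating variable lies in $L^p$ by $(\ref{ass:square})$, the convergence also holds in $L^1(\Omega)$.

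Next I would pass to a weak limit. Let $q:=p/(p-1)$. The $u_n$ all lie in the bounded, convex, strongly closed set $\mathcal C:=\{v:0\leq v\leq L\}$ of the reflexive Banach space $L^q(\Omega\times[0,T],P\otimes dt)$, so along a subsequence $u_{n_k}\rightharpoonup\bar u$ weakly in $L^q$. I would then verify that $\bar u\in U(\tau,Y)$: (i) $\bar u\in\mathcal C$ since $\mathcal C$ is convex and strongly closed, hence weakly closed; (ii) $\bar u$ admits a progressively measurable version, because the equivalence classes of progressively measurable elements of $L^q$ form a strongly closed linear subspace --- a strongly convergent sequence has an a.e.\ convergent subsequence whose $\limsup$ is progressively measurable --- hence a weakly closed one; and (iii) $\int_\tau^T\bar u\,ds\leq 1-Y$ a.s., because the linear map $v\mapsto\int_0^T\mathbf 1_{\{s\geq\tau\}}v(s)\,ds$ is weakly continuous from $L^q(\Omega\times[0,T])$ into $L^q(\Omega)$ (it is tested against $g\in L^p(\Omega)$ via the pairing with $g(\omega)\mathbf 1_{\{s\geq\tau\}}\in L^p(\Omega\times[0,T])$) and $\{Z\in L^q(\Omega):Z\leq 1-Y\ \text{a.s.}\}$ is convex and strongly closed.

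It then remains to identify the value of $\bar u$. For every $A\in\mathcal F_\tau$ the function $(s,\omega)\mapsto\mathbf 1_A(\omega)X(s,\omega)\mathbf 1_{\{s\geq\tau\}}$ belongs to $L^p(\Omega\times[0,T])$, which is the dual of $L^q$, so weak convergence gives $E[\mathbf 1_A\int_\tau^T u_{n_k}X\,ds]\to E[\mathbf 1_A\int_\tau^T\bar uX\,ds]$; that is, $E[\int_\tau^T u_{n_k}X\,ds\mid\mathcal F_\tau]\rightharpoonup E[\int_\tau^T\bar uX\,ds\mid\mathcal F_\tau]$ weakly in $L^p(\Omega)$. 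Comparing with the $L^1$-convergence from the first step and uniqueness of limits yields $E[\int_\tau^T\bar uX\,ds\mid\mathcal F_\tau]=\bar J(\tau,Y)$ a.s., which together with $\bar u\in U(\tau,Y)$ is the assertion.

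The step I expect to be the crux is the second one: the only limit available is a weak one, so I must ensure the weak limit $\bar u$ does not leave $U(\tau,Y)$ --- in particular that it remains adapted and still obeys the \emph{pathwise} budget constraint $\int_\tau^T\bar u\,ds\leq 1-Y$ rather than merely an averaged version --- and I must pass to the limit in the objective despite the weak integrability $(\ref{ass:square})$ of $X$. Working with the conjugate exponent $q$ as above handles the latter cleanly; alternatively one could stay in $L^2$, truncate $X$ at level $M$, pass to the limit, and then let $M\to\infty$ using $E[\int_0^T(X-X\wedge M)\,ds]\to0$. The remaining verifications (directedness, measurability of the pasted controls, the elementary bounds) are routine.
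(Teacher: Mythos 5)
Your proof follows the same route as the paper's: upward directedness (closure under pathwise maximization) to extract a single nondecreasing maximizing sequence, weak sequential compactness of $U(\tau,Y)$ in $L^q$ with $1/p+1/q=1$, and identification of the weak limit's value by testing against $\mathbf 1_A X\mathbf 1_{\{s\geq\tau\}}\in L^p$ for $A\in\mathcal F_\tau$. You spell out more explicitly the verifications the paper leaves implicit (that the weak limit stays adapted and still satisfies the pathwise budget constraint, and the pasting argument the paper outsources to Theorem~A.3 of Karatzas--Shreve), but the argument is essentially identical.
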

\begin{proof}
 We consider $U(\tau,Y)$ as a subset of $L^q(\mathcal{F}\otimes \mathcal{B}[0,T], P \otimes \lambda_{[0,T]})$ where $\mathcal{B}[0,T]$ and
$\lambda_{[0,T]}$ denote the Borel $\sigma$-field and  the Lebesgue measure on $[0,T]$, and $1/q+1/p=1$ for $p>1$ as in assumption (\ref{ass:square}). Note that $U(\tau,Y)$ is a weakly sequentially compact
subset
of the reflexive Banach space $L^q(\mathcal{F}\otimes \mathcal{B}[0,T], P \otimes \lambda_{[0,T]})$, because $U(\tau,Y)$ is bounded and closed in the strong topology and convex.

We now introduce the set
$$
\mathcal{M}=\left\{ E[\int_\tau^T u(r)X(r)dr|\mathcal{F}_\tau],\; u\in U(\tau,Y) \right\}.
$$
It is straighforward to check that $\mathcal{M}$ is closed under
\iindex{pathwise} maximization, i.e. $M_1, M_2 \in \mathcal{M}$
implies that $M_1 \vee M_2 \in \mathcal{M}$. Hence, by Theorem A.3
in \citet{KS1998}, there is a sequence $(u_n)\subset U(\tau, Y)$
such that
\begin{equation}\label{eq:hilf1}
E[\int_\tau^T  u_n(r)X(r)dr|\mathcal{F}_\tau] \uparrow J(\tau,Y),\quad n \rightarrow \infty.
\end{equation}
As $U(\tau,Y)$ is weakly sequentially compact, we can assume without loss of generality (by passing to a subsequence, if necessary), that $u_n$ converges
weakly in $L^q(\mathcal{F}\otimes \mathcal{B}[0,T], P \otimes \lambda_{[0,T]})$ to some $\bar u \in U(\tau,Y)$. We now show that $\bar u$ is indeed
optimal. Suppose $A\in \mathcal{F}_\tau$.  By weak convergence of $(u_n)$ to $\bar u$ and considering $X  {\bf 1}_{A\times [\tau,T]}$ as an element of
$L^p(\mathcal{F}\otimes \mathcal{B}[0,T], P \otimes \lambda_{[0,T]})$, we get
\begin{equation*}
E[{\bf 1}_A E[\int_\tau^T u_n(r)X(r)dr|\mathcal{F}_\tau]] \uparrow E[{\bf 1}_A E[\int_\tau^T \bar u(r)X(r)dr|\mathcal{F}_\tau]],\quad n\rightarrow \infty,
\end{equation*}
which, combined with (\ref{eq:hilf1}), yields
$$
 E[{\bf 1}_A E[\int_\tau^T \bar u(r)X(r)dr|\mathcal{F}_\tau]]= E[{\bf 1}_A J(\tau,Y)]
$$
As $A\in \mathcal{F}_\tau$ was arbitrary, this immediately gives
$$
J(\tau,Y) = E_\tau[\int_\tau^T  \bar u(r)X(r)dr|\mathcal{F}_\tau].
$$
\end{proof}

At several instances, it will be convenient to switch from the control set $U(\tau,Y)$ to the subset $U'(\tau,Y)$ of controls $u$
which additionally satisfy
\begin{equation}\label{U'}
 u(r)=L \textnormal{ on } \{L(T-r)\leq 1-(Y+\int_\tau^r u(s)ds)\},
\end{equation}
$\lambda_{[0,T]}\otimes P$-almost everywhere.

We collect some facts on the relation between $U(\tau,Y)$ and $U'(\tau,Y)$ in the following proposition.
\begin{prop}\label{prop:U'}
 (i) For any control $u\in U(\tau,Y)$, there is a control $\tilde u\in U'(\tau,Y)$ such that $\tilde u(r)\geq u(r)$ on $[\tau,T]$.
In particular, there exists an optimal strategy $\bar u^{\tau,Y}\in U'(\tau,Y)$ for $J(\tau,Y)$. \\
(ii) $u\in U(\tau,Y)$ belongs to  $ U'(\tau,Y)$, if and only if
$\int_\tau^T u(r)dr=1-Y$ on the set  $\{L(T-\tau)\geq 1-Y\}$ and $u(r)=L$ for $r\in [\tau,T]$ on the set
$\{L(T-\tau)\leq 1-Y\}$.
\end{prop}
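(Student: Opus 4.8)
The plan is to organize both parts around the auxiliary process
$$
g_u(r):=L(T-r)-\Bigl(1-Y-\int_\tau^r u(s)\,ds\Bigr),\qquad \tau\le r\le T,
$$
attached to a control $u\in U(\tau,Y)$. The observation that makes the proposition essentially a picture is that $g_u$ is continuous, Lipschitz, and \emph{non-increasing} in $r$ (one has $\frac{d}{dr}g_u(r)=u(r)-L\le 0$ for a.e.\ $r$), and that $g_u(T)=\int_\tau^T u(s)\,ds-(1-Y)\le 0$. Consequently the ``saturation region'' $\{r\in[\tau,T]:L(T-r)\le 1-Y-\int_\tau^r u(s)\,ds\}=\{r:g_u(r)\le 0\}$ is, pathwise, a closed subinterval $[\sigma_u,T]$ of $[\tau,T]$ that always contains $T$; moreover, on any subinterval on which $u\equiv L$ one has $g_u'=0$ a.e., so $g_u$ is constant there. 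Everything else is bookkeeping, with the understanding --- inherited from the definition of $U'$ --- that identities involving controls are meant $\lambda_{[0,T]}\otimes P$-almost everywhere.

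For part (i), given $u\in U(\tau,Y)$ I would set $\sigma:=\inf\{r\in[\tau,T]:g_u(r)\le 0\}$. This is a stopping time, since by monotonicity and continuity of $g_u$ one has $\{\sigma\le t\}=\{\tau\le t\}\cap\{g_u(t)\le 0\}\in\mathcal{F}_t$. I then define $\tilde u(r):=u(r)$ for $r<\sigma$ and $\tilde u(r):=L$ for $\sigma\le r\le T$; this is $\mathbb{F}$-adapted, $[0,L]$-valued, and satisfies $\tilde u\ge u$ on $[\tau,T]$. Two checks remain. First, $\tilde u$ respects the budget: $\int_\tau^T\tilde u\,ds=\int_\tau^\sigma u\,ds+L(T-\sigma)\le 1-Y$ is precisely $g_u(\sigma)\le 0$, which holds because $\sigma$ lies in the (closed) saturation region. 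Second, $\tilde u\in U'(\tau,Y)$: for $\sigma\le r\le T$ one has $\tilde u(r)=L$ trivially, while for $r<\sigma$ one has $\int_\tau^r\tilde u=\int_\tau^r u$, hence $g_{\tilde u}(r)=g_u(r)>0$, so such $r$ never lie in the saturation region of $\tilde u$ and the defining a.e.\ inclusion in (\ref{U'}) is vacuous there. The ``in particular'' claim then follows by applying this construction to an optimal $\bar u$ supplied by Proposition \ref{prop:optimalcontrol}: since $X\ge 0$ and the resulting $\bar u^{\tau,Y}$ dominates $\bar u$, we get $E[\int_\tau^T\bar u^{\tau,Y}(r)X(r)\,dr|\mathcal{F}_\tau]\ge\bar J(\tau,Y)$, whereas the reverse inequality is automatic because $\bar u^{\tau,Y}\in U(\tau,Y)$.

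For part (ii), I would argue pathwise, using Fubini to pass the a.e.\ condition defining $U'$ to $P$-a.e.\ $\omega$. For ``$\Leftarrow$'': on $\{L(T-\tau)\le 1-Y\}$ the hypothesis $u\equiv L$ on $[\tau,T]$ makes the $U'$ condition automatic; on $\{L(T-\tau)\ge 1-Y\}$ the hypothesis $\int_\tau^T u\,ds=1-Y$ gives $g_u(\tau)\ge 0$ and $g_u(T)=0$, so by monotonicity $g_u\ge 0$ throughout, the saturation region equals $\{g_u=0\}=[\rho,T]$ for some $\rho$, and on $[\rho,T]$ one has $g_u\equiv 0$, hence $g_u'=u-L=0$ a.e., i.e.\ $u=L$ a.e.\ on the saturation region --- exactly the $U'$ property. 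For ``$\Rightarrow$'': on $\{L(T-\tau)\le 1-Y\}$ monotonicity gives $g_u\le 0$ on all of $[\tau,T]$, so the saturation region is the whole interval and $U'$ forces $u=L$ a.e.\ there; on $\{L(T-\tau)>1-Y\}$ I argue by contradiction --- if the event $\{L(T-\tau)>1-Y\}\cap\{\int_\tau^T u\,ds<1-Y\}$ had positive probability, then on it $g_u(\tau)>0>g_u(T)$, so $g_u$ has a zero $\rho\in(\tau,T)$ beyond which $g_u\le 0$; the $U'$ property then forces $u=L$ a.e.\ on $[\rho,T]$, making $g_u$ constant there, which contradicts $g_u(\rho)=0>g_u(T)$. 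Hence $\int_\tau^T u\,ds=1-Y$ a.s.\ on $\{L(T-\tau)>1-Y\}$, and since this set together with $\{L(T-\tau)\le 1-Y\}$ covers $\Omega$, combining the two halves gives the stated characterization.

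I expect no genuine conceptual obstacle; the care required is purely technical --- verifying that $\sigma$ is a stopping time and $\tilde u$ adapted in part (i), and keeping the ``almost everywhere'' qualifiers consistent throughout part (ii) so that the absolutely continuous calculus $g_u'=u-L$ a.e.\ (hence ``$u\equiv L$ on a subinterval $\Rightarrow$ $g_u$ constant there'') is invoked legitimately. The monotonicity of $g_u$, and the resulting interval shape of the saturation region together with the fact that it always contains $T$, is the single structural point to pin down before starting.
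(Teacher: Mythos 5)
Your proof is correct and takes essentially the same route as the paper: the auxiliary function $g_u$ is just a sign-flipped reformulation of the constraint in (\ref{U'}), and your hitting time $\sigma$ coincides with the paper's $\tau_{L,u}$, with the truncation $\wedge T$ being superfluous precisely because $g_u(T)\le 0$. The only cosmetic difference is that in part (ii) you exploit the ``constant on the saturation interval $\Rightarrow$ derivative vanishes a.e.'' observation to unify the cases, whereas the paper carries out the equivalent budget computation directly via $\tau_{L,u}$; your packaging also explicitly handles the boundary case $\tau_{L,u}=T>\tau$ that the paper's case split passes over.
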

\begin{proof}
 For $u\in U(\tau,Y)$ define
$$
\tau_{L,u}=\inf\{r\geq \tau;\; L(T-r) \leq 1- (Y+\int_\tau^r u(s)ds)\}\wedge T.
$$
(i) If $u\in U(\tau,Y)$, then
$$
\tilde u =u {\bf 1}_{[\tau,\tau_{L,u})}+ L {\bf 1}_{[\tau_{L,u},T]} \in U'(\tau,Y).
$$
(ii) Suppose $u\in U'(\tau,Y)$. On the set $\{L(T-\tau)\leq 1-Y\}$ we have $\tau_{L,u}=\tau$. Hence $u(r)=L$ on $[\tau,T]$. On the
set $\{L(T-\tau)\geq 1-Y\}\cap\{\tau_{L,u}<T\}$ we get
\begin{eqnarray}\label{eq:hilf11}
\int_\tau^T u(r)dr &=&\int_\tau^{\tau_{L,u}} u(r)dr + \int_{\tau_{L,u}}^T u(r)dr  \nonumber\\ &=& 1-Y-L(T-\tau_{L,u}) + L( T-\tau_{L,u})=1-Y.
\end{eqnarray}
On the
set $\{L(T-\tau)\geq 1-Y\}\cap\{\tau_{L,u}=T\}$, we obtain $\int_\tau^T u(r)dr \geq 1-Y$ by (\ref{U'}) and the other inequality is trivial.

Now suppose that $u\in U(\tau,Y)$ satisfies the two properties stated in the assertion. If $\tau_{L,u}=\tau$, then $L(T-\tau)\leq 1-Y$
and hence  $u(r)=L$ for $r\in [\tau,T]$. If $\tau<\tau_{L,u}<T$, then $L(T-\tau)> 1-Y$, and hence
$\int_\tau^T u(r)dr=1-Y$. An analogous calculation than in (\ref{eq:hilf11}) shows
$$
\int_{\tau_{L,u}}^T u(r)dr=L(T-\tau_{L,u}),
$$
which implies (\ref{U'}).
\end{proof}

Next, we state the dynamic programming principle for this optimization problem. Its simple proof is omitted.

\begin{prop}\label{prop:dp}
 Suppose $\sigma \leq \tau$ are $[0,T]$-valued stopping times and $Y$ is an $\mathcal{F}_\sigma$-measurable, $(-\infty,1]$-valued random
variable. Then,
$$
\bar J(\sigma,Y)=\esssup_{u\in U(\sigma,Y)} E[\int_\sigma^\tau u(r)X(r)dr + \bar J(\tau,Y+\int_\sigma^\tau u(r)dr)|\mathcal{F}_\sigma]
$$
\end{prop}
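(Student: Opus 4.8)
The plan is to prove the two inequalities between $\bar J(\sigma,Y)$ and the right-hand side, which I abbreviate by $R(\sigma,Y):=\esssup_{u\in U(\sigma,Y)} E[\int_\sigma^\tau u(r)X(r)dr+\bar J(\tau,Y+\int_\sigma^\tau u(r)dr)\mid\mathcal{F}_\sigma]$, separately. I would first note that $R(\sigma,Y)$ is well defined: for $u\in U(\sigma,Y)$ the random variable $Y+\int_\sigma^\tau u(r)dr$ is $\mathcal{F}_\tau$-measurable (since $\tau$ is a stopping time and $Y$ is $\mathcal{F}_\sigma\subset\mathcal{F}_\tau$-measurable) and takes values in $(-\infty,1]$, because $u\geq 0$ and $\int_\sigma^\tau u(r)dr\leq\int_\sigma^T u(r)dr\leq 1-Y$, so $\bar J(\tau,Y+\int_\sigma^\tau u(r)dr)$ makes sense.

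For the inequality $\bar J(\sigma,Y)\leq R(\sigma,Y)$, I would fix $u\in U(\sigma,Y)$, set $Y'=Y+\int_\sigma^\tau u(r)dr$, and observe that the restriction of $u$ to $[\tau,T]$ lies in $U(\tau,Y')$, since $\int_\tau^T u(r)dr=\int_\sigma^T u(r)dr-\int_\sigma^\tau u(r)dr\leq (1-Y)-\int_\sigma^\tau u(r)dr=1-Y'$. Writing $\int_\sigma^T=\int_\sigma^\tau+\int_\tau^T$, conditioning the second piece on $\mathcal{F}_\tau$ and bounding $E[\int_\tau^T u(r)X(r)dr\mid\mathcal{F}_\tau]\leq\bar J(\tau,Y')$ by definition of the essential supremum, then taking $E[\,\cdot\mid\mathcal{F}_\sigma]$ and using the tower property (note $\int_\sigma^\tau u(r)X(r)dr$ is $\mathcal{F}_\tau$-measurable), one obtains $E[\int_\sigma^T u(r)X(r)dr\mid\mathcal{F}_\sigma]\leq E[\int_\sigma^\tau u(r)X(r)dr+\bar J(\tau,Y')\mid\mathcal{F}_\sigma]\leq R(\sigma,Y)$. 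Taking the essential supremum over $u\in U(\sigma,Y)$ finishes this direction.

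For the reverse inequality I would use a pasting argument. Fix $u_1\in U(\sigma,Y)$, put $Y'=Y+\int_\sigma^\tau u_1(r)dr$, and invoke Proposition~\ref{prop:optimalcontrol} to pick an optimal $\bar u'\in U(\tau,Y')$, so that $\bar J(\tau,Y')=E[\int_\tau^T\bar u'(r)X(r)dr\mid\mathcal{F}_\tau]$. The pasted process $u:=u_1{\bf 1}_{[\sigma,\tau)}+\bar u'{\bf 1}_{[\tau,T]}$ is $\mathbb{F}$-adapted (the stochastic intervals are progressive), $[0,L]$-valued, and satisfies $\int_\sigma^T u(r)dr=\int_\sigma^\tau u_1(r)dr+\int_\tau^T\bar u'(r)dr\leq\int_\sigma^\tau u_1(r)dr+(1-Y')=1-Y$, hence $u\in U(\sigma,Y)$. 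Then, by the tower property, $\bar J(\sigma,Y)\geq E[\int_\sigma^T u(r)X(r)dr\mid\mathcal{F}_\sigma]=E[\int_\sigma^\tau u_1(r)X(r)dr+\bar J(\tau,Y')\mid\mathcal{F}_\sigma]$, and since $u_1\in U(\sigma,Y)$ was arbitrary, the essential supremum over $u_1$ yields $\bar J(\sigma,Y)\geq R(\sigma,Y)$.

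I do not expect a genuine obstacle here — this is why the paper calls the proof ``simple''. The only points needing a little care are the bookkeeping of the global volume constraint when splitting the integral at $\tau$ (where nonnegativity of $u$ enters) and, in the pasting step, checking that the concatenation at the stopping time $\tau$ stays in $U(\sigma,Y)$ and remains adapted. Using Proposition~\ref{prop:optimalcontrol} to get an \emph{exactly} optimal continuation control $\bar u'$ is what keeps the argument clean; if one preferred not to use existence of optimal controls, one could instead concatenate with an $\mathcal{F}_\tau$-measurable $\varepsilon$-optimal selection, or use the lattice property of the set $\mathcal{M}$ from the proof of Proposition~\ref{prop:optimalcontrol} together with Theorem~A.3 in \citet{KS1998} and a limiting argument, but this only adds routine technicalities.
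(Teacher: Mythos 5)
Your proof is correct, and since the paper explicitly omits the proof as ``simple,'' the two-inequality argument you give --- splitting the integral at $\tau$ for the easy direction and pasting an optimal continuation control (from Proposition~\ref{prop:optimalcontrol}) for the hard direction --- is precisely the standard argument the authors are implicitly invoking. The measurability bookkeeping (that $Y+\int_\sigma^\tau u(r)\,dr$ is $\mathcal{F}_\tau$-measurable and $(-\infty,1]$-valued, and that the pasted control stays in $U(\sigma,Y)$) is handled correctly.
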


The next lemma singles out two properties which are related to Lipschitz continuity and concavity of $J$ in the $y$-variable.
\begin{lem}\label{lem:properties}
 Suppose $\tau$ is a $[0,T]$-valued stopping time and $Y_1,Y_2$ are $\mathcal{F}_\tau$-measurable $(-\infty,1]$-valued random variables.
Then, $P$-almost surely,
\begin{eqnarray}
 |\bar J(\tau,Y_1)-\bar J(\tau,Y_2)| &\leq& E[(\sup_{0\leq t \leq T} X(t))\,|\mathcal{F}_\tau] \,|Y_1-Y_2|. \label{eq:lip} \\
\bar J\left(\tau,\frac{Y_1+Y_2}{2}\right)&\geq& \frac{\bar J(\tau,Y_1)+\bar J(t,Y_2)}{2} \label{eq:concave}
\end{eqnarray}
\end{lem}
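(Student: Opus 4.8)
The plan is to read both inequalities off the definition of $\bar J(\tau,\cdot)$ as an essential supremum over admissible controls, using the existence of optimal controls from Proposition \ref{prop:optimalcontrol}.

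For \eqref{eq:lip} I would first reduce to the case $Y_1\le Y_2$. Both sides of \eqref{eq:lip} are symmetric in $Y_1,Y_2$, and $\bar J(\tau,\cdot)$ has the local property that $\bar J(\tau,Y)=\bar J(\tau,Y')$ $P$-a.s.\ on $\{Y=Y'\}$ (given $u\in U(\tau,Y)$, the process $u\mathbf 1_{\{Y=Y'\}}$ lies in $U(\tau,Y')$ and has, on $\{Y=Y'\}$, the same $\mathcal F_\tau$-conditional reward as $u$, so $\bar J(\tau,Y')\ge\bar J(\tau,Y)$ there, and symmetrically); hence one may replace $(Y_1,Y_2)$ by $(Y_1\wedge Y_2,\,Y_1\vee Y_2)$ and assume $Y_1\le Y_2$ everywhere. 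Then $U(\tau,Y_2)\subseteq U(\tau,Y_1)$, so $\bar J(\tau,Y_2)\le\bar J(\tau,Y_1)$, and it remains to bound $\bar J(\tau,Y_1)-\bar J(\tau,Y_2)$ from above. For this I would take an optimal $\bar u\in U(\tau,Y_1)$ and \emph{rescale} it: put $\alpha:=(1-Y_2)/(1-Y_1)$ on $\{Y_1<1\}$ and $\alpha:=0$ on $\{Y_1=1\}$ (where $Y_2=1$ as well), and set $\tilde u:=\alpha\bar u$. Since $\alpha$ is $\mathcal F_\tau$-measurable with values in $[0,1]$, the process $\tilde u$ is adapted, $[0,L]$-valued, and $\int_\tau^T\tilde u\,ds=\alpha\int_\tau^T\bar u\,ds\le\alpha(1-Y_1)=1-Y_2$, so $\tilde u\in U(\tau,Y_2)$. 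Therefore
\begin{align*}
\bar J(\tau,Y_1)-\bar J(\tau,Y_2)
&\le E\left[\int_\tau^T\bar u(s)X(s)\,ds\,\Big|\,\mathcal F_\tau\right]-E\left[\int_\tau^T\tilde u(s)X(s)\,ds\,\Big|\,\mathcal F_\tau\right]\\
&=(1-\alpha)\,E\left[\int_\tau^T\bar u(s)X(s)\,ds\,\Big|\,\mathcal F_\tau\right]\\
&\le(1-\alpha)(1-Y_1)\,E\left[\sup\nolimits_{0\le t\le T}X(t)\,\Big|\,\mathcal F_\tau\right]=|Y_1-Y_2|\,E\left[\sup\nolimits_{0\le t\le T}X(t)\,\Big|\,\mathcal F_\tau\right],
\end{align*}
where the last inequality uses $\int_\tau^T\bar u\,ds\le 1-Y_1$ and the final equality uses $(1-\alpha)(1-Y_1)=Y_2-Y_1=|Y_1-Y_2|$; this is exactly \eqref{eq:lip}.

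For \eqref{eq:concave} I would average optimal controls. Let $\bar u_1\in U(\tau,Y_1)$ and $\bar u_2\in U(\tau,Y_2)$ be optimal and set $u:=\tfrac12(\bar u_1+\bar u_2)$. As a convex combination, $u$ is adapted and $[0,L]$-valued, and $\int_\tau^T u\,ds\le\tfrac12(1-Y_1)+\tfrac12(1-Y_2)=1-\tfrac{Y_1+Y_2}{2}$, so $u\in U(\tau,\tfrac{Y_1+Y_2}{2})$ (note $\tfrac{Y_1+Y_2}{2}\le 1$). Hence
\begin{align*}
\bar J\left(\tau,\tfrac{Y_1+Y_2}{2}\right)&\ge E\left[\int_\tau^T u(s)X(s)\,ds\,\Big|\,\mathcal F_\tau\right]\\
&=\tfrac12 E\left[\int_\tau^T\bar u_1(s)X(s)\,ds\,\Big|\,\mathcal F_\tau\right]+\tfrac12 E\left[\int_\tau^T\bar u_2(s)X(s)\,ds\,\Big|\,\mathcal F_\tau\right]=\tfrac{\bar J(\tau,Y_1)+\bar J(\tau,Y_2)}{2}.
\end{align*}

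The argument is elementary once Proposition \ref{prop:optimalcontrol} is available; the only points needing care are the measurability bookkeeping caused by the randomness of $Y_1,Y_2$ (handled by the local property of $\bar J(\tau,\cdot)$ and the reduction to $Y_1\le Y_2$) and the degenerate case $Y_1=1$ in the rescaling (handled by the convention $\alpha:=0$). The one substantive idea is the rescaling $\tilde u=\alpha\bar u$ used for \eqref{eq:lip}.
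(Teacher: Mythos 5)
Your proof is correct. For \eqref{eq:concave} you use the same argument as the paper: average admissible (or optimal) controls to get an admissible control for the midpoint of the budgets. For \eqref{eq:lip}, however, your construction is genuinely different. The paper takes an optimal control $\bar u^{\tau,Y_1}\in U(\tau,Y_1)$ and \emph{truncates} it at the stopping time
$\sigma=\inf\{t\geq\tau:\,\int_\tau^t\bar u^{\tau,Y_1}\,ds\geq 1-Y_2\}\wedge T$, so that $u=\bar u^{\tau,Y_1}\mathbf 1_{[\tau,\sigma]}\in U(\tau,Y_2)$, and then bounds $E[\int_\sigma^T\bar u^{\tau,Y_1}X\,ds\,|\,\mathcal F_\tau]\leq E[(\sup X)\int_\sigma^T\bar u^{\tau,Y_1}\,ds\,|\,\mathcal F_\tau]\leq(Y_2-Y_1)\,E[\sup X\,|\,\mathcal F_\tau]$. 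You instead \emph{rescale} the whole control by the $\mathcal F_\tau$-measurable factor $\alpha=(1-Y_2)/(1-Y_1)$. Both produce an element of $U(\tau,Y_2)$ whose reward differs from $\bar J(\tau,Y_1)$ by at most $|Y_1-Y_2|\,E[\sup X\,|\,\mathcal F_\tau]$, and both are elementary. The paper's truncation discards only the excess volume $\int_\sigma^T\bar u\,ds\leq Y_2-Y_1$, whereas your rescaling thins the control uniformly; the truncation argument is thus a touch more natural from the control-theoretic viewpoint (it is what $U'$-type arguments in Proposition \ref{prop:U'} use later), while the rescaling is algebraically cleaner and avoids introducing a new stopping time. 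One further minor difference: you perform an explicit WLOG reduction to $Y_1\leq Y_2$ via the local property of $\bar J(\tau,\cdot)$, whereas the paper simply runs the argument on the event $\{Y_1\leq Y_2\}$ and then swaps roles on the complement; both handle the measurability bookkeeping correctly.
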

\begin{proof}
 We first show (\ref{eq:lip}). Choose an optimal strategy $\bar u^{\tau,Y_1}\in U(\tau,Y_1)$ and define
$\sigma=\inf\{t\geq \tau;\; \int_\tau^t \bar u^{\tau,Y_1}(s)ds\geq 1-Y_2\}\wedge T$. Then,
$u(t)=\bar  u^{\tau,Y_1} {\bf 1}_{[\tau,\sigma]}\in U(\tau,Y_2)$. Consequently, on the set $\{Y_1\leq Y_2\}$,
we get
\begin{eqnarray*}
0&\leq& \bar J(\tau,Y_1)-\bar J(\tau,Y_2)\leq E[\int_\tau^T (\bar u^{\tau,Y_1}(s)-u(s))X(s)ds|\mathcal{F}_\tau] \\&=& E[\int_\sigma^T
\bar u^{\tau,Y_1}(s)X(s)ds|\mathcal{F}_\tau] \leq E[(\sup_{0\leq r \leq T} X(r)) \int_\sigma^T
\bar u^{\tau,Y_1}(s)ds|\mathcal{F}_\tau] \\ &\leq&  E[(\sup_{0\leq r \leq T} X(r)) (Y_2-Y_1)|\mathcal{F}_\tau]
\end{eqnarray*}
Changing the roles of $Y_1$ and $Y_2$, \iindex{we obtain that} this
inequality also holds on $\{Y_1> Y_2\}$, which proves
(\ref{eq:lip}).

For (\ref{eq:concave}) one merely needs to note that for $u_1\in U(\tau, Y_1)$ and $u_2 \in U(\tau, Y_2)$, the control
$(u_1+u_2)/2$ belongs to $U(\tau, (Y_1+Y_2)/2)$.
\end{proof}

We next construct a `good' version of $\bar J(t,y)$ as stated in the following proposition.

\begin{prop}\label{prop:goodversion}
 There is an adapted random field $(J(t,y),\; t\in[0,T],\, y\in(-\infty,1])$ with the following properties:
\\ a) For every pair $(\tau,Y)$
$$
J(\tau,Y)=\bar J(\tau,Y)\quad P-a.s.
$$
In particular, for every $y\in (-\infty ,1]$, $J(t,y)$ is an adapted modification of $\bar J(t,y)$.
\\ b) There is a set $\bar \Omega \in \mathcal{F}$ with $P(\bar \Omega)=1$ such that the following properties hold on $\bar \Omega$:
\begin{enumerate}
 \item For every $y\in(-\infty,1]$, the mapping $t\mapsto  J(t,y)$ is RCLL.
  \item For every $t\in[0,T]$ and $y_1,y_2\in (-\infty,1]$
 $$
|J(t,y_1)-J(t,y_2)|\leq \left(\sup_{r\in [0,T]} Z(r) \right) |y_1-y_2|
$$
where $Z(t)$ is a RCLL modification of $E[\sup_{r\in [0,T]} X(r)|\mathcal{F}_t]$ which satisfies $\sup_{r\in [0,T]} Z(r) < \infty$
on $\bar \Omega$.
\item For every $t\in[0,T]$, the mapping $y\mapsto J(t,y)$ is concave.
\end{enumerate}

\end{prop}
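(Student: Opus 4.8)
The plan is to build $J(t,y)$ by first constructing it on a countable dense set of arguments, where the almost-sure exceptional set can be handled uniformly, and then extending by continuity. Concretely, I would fix a countable dense subset $\mathcal{Q} \subset (-\infty,1]$ containing $1$ (say, the rationals in $(-\infty,1]$). For each $q \in \mathcal{Q}$, let $\tilde J(\cdot,q)$ be an RCLL modification of the supermartingale-like process $\bar J(\cdot,q)$; such a modification exists because $t\mapsto \bar J(t,q)$ admits a version with RCLL paths (one checks, using the dynamic programming principle of Proposition~\ref{prop:dp} together with the integrability assumption (\ref{ass:square}), that $t \mapsto \bar J(t,q)$ is, up to the additive term $E[\int_0^t u(r)X(r)dr\mid\mathcal{F}_t]$, a supermartingale, and the usual càdlàg-modification theorem under the usual conditions applies). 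Similarly let $Z(t)$ be the prescribed RCLL modification of the martingale $E[\sup_{r\in[0,T]}X(r)\mid\mathcal{F}_t]$, which by (\ref{ass:square}) and Doob's inequality is an $L^p$-bounded, hence uniformly integrable, martingale, so $\sup_{r\in[0,T]}Z(r)<\infty$ a.s.

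Next I would transfer the pathwise Lipschitz and concavity estimates from Lemma~\ref{lem:properties} to the chosen RCLL modifications. For fixed $q_1,q_2\in\mathcal{Q}$ and fixed rational $t$, (\ref{eq:lip}) and (\ref{eq:concave}) hold $P$-a.s.; intersecting over the countably many pairs $(q_1,q_2)\in\mathcal{Q}^2$ and countably many rational $t$, and then using right-continuity in $t$ of both sides (here is where one needs $\tilde J(\cdot,q)$ and $Z$ to be RCLL) to pass to all $t\in[0,T]$, I obtain a single set $\bar\Omega\in\mathcal{F}$ of full measure on which, for all $t\in[0,T]$ and all $q_1,q_2\in\mathcal{Q}$, $|\tilde J(t,q_1)-\tilde J(t,q_2)|\le(\sup_r Z(r))|q_1-q_2|$ and $\tilde J(t,\tfrac{q_1+q_2}{2})\ge\tfrac12(\tilde J(t,q_1)+\tilde J(t,q_2))$; I also throw into the complement of $\bar\Omega$ the null set where $\sup_r Z(r)=\infty$ and the null sets where the RCLL paths fail to exist. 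On $\bar\Omega$, for each fixed $t$, $q\mapsto\tilde J(t,q)$ is Lipschitz on $\mathcal{Q}$ with constant $\sup_r Z(r)$, hence extends uniquely to a Lipschitz function $y\mapsto J(t,y)$ on $(-\infty,1]$; midpoint concavity on the dense set $\mathcal{Q}$ together with continuity upgrades to full concavity of $y\mapsto J(t,y)$. Off $\bar\Omega$ I simply set $J(t,y)\equiv 0$. This gives properties b)1--3: b)2 and b)3 are immediate from the construction, and b)1 holds because for fixed $y\in(-\infty,1]$ and $\omega\in\bar\Omega$, $t\mapsto J(t,y)$ is a uniform (in $t$) limit of the RCLL paths $t\mapsto\tilde J(t,q_n)$ for $q_n\to y$ in $\mathcal{Q}$ — the Lipschitz estimate with the fixed finite constant $\sup_r Z(r)$ makes the convergence uniform in $t$, and a uniform limit of RCLL functions is RCLL.

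The remaining point, property a), is where the main care is needed: I must show that for an \emph{arbitrary} pair $(\tau,Y)$ — not just $(t,q)$ with $t$ rational and $q\in\mathcal{Q}$ — one has $J(\tau,Y)=\bar J(\tau,Y)$ a.s. First, for deterministic $t$ and $q\in\mathcal{Q}$ this is immediate since $J(t,q)=\tilde J(t,q)$ on $\bar\Omega$ and $\tilde J(\cdot,q)$ is a modification of $\bar J(\cdot,q)$. To handle a general stopping time $\tau$, I would approximate $\tau$ from the right by a decreasing sequence of discrete-valued stopping times $\tau_n\downarrow\tau$ and use the right-continuity in $t$ of both $J(\cdot,q)$ (by b)1) and of $\bar J(\cdot,q)$ (via its RCLL modification), together with an optional-sampling argument for the supermartingale part, to conclude $J(\tau,q)=\bar J(\tau,q)$ a.s. for each fixed $q\in\mathcal{Q}$; then, for general $Y$, I approximate: write $Y$ as a limit of $\mathcal{F}_\tau$-measurable simple random variables $Y_k$ taking values in $\mathcal{Q}$, use the pathwise Lipschitz bound (\ref{eq:lip}) for $\bar J$ and property b)2 for $J$ to pass to the limit on both sides, noting that $\bar J(\tau,Y_k)=\sum_j\mathbf{1}_{\{Y_k=q_j\}}\bar J(\tau,q_j)=\sum_j\mathbf{1}_{\{Y_k=q_j\}}J(\tau,q_j)=J(\tau,Y_k)$ a.s. The main obstacle I anticipate is precisely this reduction in a), specifically verifying that the RCLL modifications interact correctly with optional sampling at the random time $\tau$: one needs $\bar J(\tau,q)$ to equal the value of the RCLL modification at $\tau$, which requires knowing that $t\mapsto\bar J(t,q)$ has no "predictable jumps" that the modification could shift — this follows from the fact that the modification is RCLL and $\bar J(t,q)$ itself is already right-continuous in an appropriate sense inherited from right-continuity of $X$ via dominated convergence in the definition of the essential supremum, but making this rigorous is the technical heart of the argument.
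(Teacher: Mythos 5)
Your construction of the random field essentially matches the paper's: restrict to a countable dense set $\mathcal{Q}$ in $y$, take an RCLL modification in $t$ for each $q\in\mathcal{Q}$ using the supermartingale structure, collect the Lipschitz/concavity null sets into a single $\bar\Omega$, extend Lipschitz-continuously in $y$, and upgrade midpoint concavity. One small inaccuracy: $\bar J(\cdot,q)$ for fixed $q$ is itself a supermartingale because $U(t_2,q)\subset U(t_1,q)$ for $t_1\le t_2$; there is no need for (and it is not quite correct to speak of) an ``additive term'' correction. The paper's Lemma~\ref{lem:rc1} then establishes right-continuity of $t\mapsto E[\bar J(t,q)]$, which via the Karatzas--Shreve modification theorem yields the RCLL version.

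The genuine gap is in your proof of part a), and you are right to flag it as the technical heart, but the argument you sketch does not close it. You want $\bar J(\tau_n,q)\to\bar J(\tau,q)$ as $\tau_n\downarrow\tau$, and you propose to get this from optional sampling plus an alleged right-continuity of $\bar J(\cdot,q)$ ``inherited from right-continuity of $X$ via dominated convergence.'' This is not sound: $\bar J(t,q)$ is defined only up to a null set at each fixed $t$ (it is an essential supremum), so pathwise right-continuity of $\bar J$ itself is not a meaningful notion prior to choosing a version; the entire point is to show that the chosen RCLL version agrees with the essential supremum at random times. Moreover, the supermartingale/optional-sampling step yields only the one-sided bound $\limsup_n\bar J(\tau_n,Y_n)\le\bar J(\tau,Y)$ (via $U(\tau_n,Y_n)\subset U(\tau,Y)$). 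The missing ingredient is the lower bound $\liminf_n\bar J(\tau_n,Y_n)\ge\bar J(\tau,Y)$, which the paper obtains by an explicit suboptimal control construction: starting from an optimal $\bar u^{\tau,Y}\in U(\tau,Y)$ (whose existence is Proposition~\ref{prop:optimalcontrol}), it builds $u_n={\bf 1}_{[\tau_n,\sigma_n]}\bar u^{\tau,Y}\in U(\tau_n,Y_n)$ with a suitable cut-off stopping time $\sigma_n$, and then uses dominated convergence and right-continuity of $X$ to show the resulting lower estimates converge. You would need to supply this construction (or an equivalent one) for part a) to be complete. As a side remark, the paper handles $\tau_n\downarrow\tau$ and $Y_n\downarrow Y$ simultaneously, which streamlines the argument compared to your two-stage approximation (first in $\tau$, then in $Y$), though the two-stage route could also be made to work once the strategy construction is in place.
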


As a preparation we need the following lemma.
\begin{lem}\label{lem:rc1}
(a) For every $y\in(-\infty,1]$, the mapping $t\mapsto E[\bar J(t,y)]$ is rightcontinuous. \\ (b) For every $y\in(-\infty,1]$,
the process $\bar J(t,y)$ has a modification $\hat J(t,y)$, which is a supermartingale whose paths are RCLL with probability one.
\end{lem}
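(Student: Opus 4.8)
For part (a), the plan is to exploit the dynamic programming principle (Proposition \ref{prop:dp}) together with the Lipschitz and concavity estimates of Lemma \ref{lem:properties}. Fix $y\in(-\infty,1]$ and a sequence $t_n\downarrow t$. Applying Proposition \ref{prop:dp} with $\sigma=t$, $\tau=t_n$, $Y=y$ gives $\bar J(t,y)=\esssup_{u\in U(t,y)}E[\int_t^{t_n}u(r)X(r)dr+\bar J(t_n,y+\int_t^{t_n}u(r)dr)\mid\mathcal{F}_t]$. Taking expectations, one bounds $|E[\bar J(t,y)]-E[\bar J(t_n,y)]|$ by three terms: the contribution $E[\int_t^{t_n}u(r)X(r)dr]$, which is at most $L\,E[\int_t^{t_n}X(r)dr]\to 0$ by dominated convergence using \eqref{ass:square}; the fluctuation $|E[\bar J(t_n,y+\int_t^{t_n}u(r)dr)]-E[\bar J(t_n,y)]|$, which by \eqref{eq:lip} is at most $E[\sup_r X(r)]\cdot L(t_n-t)\to 0$ uniformly in $u$; and a comparison of the $\esssup$ over $U(t,y)$ with $\bar J(t_n,y)=\esssup_{u\in U(t_n,y)}E[\int_{t_n}^T u(r)X(r)dr\mid\mathcal{F}_{t_n}]$, noting that restricting a control on $[t,T]$ to $[t_n,T]$ lands in $U(t_n,y)$ and conversely extending a control on $[t_n,T]$ by $0$ on $[t,t_n)$ lands in $U(t,y)$. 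Combining these estimates yields $E[\bar J(t_n,y)]\to E[\bar J(t,y)]$.

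For part (b), the strategy is the standard one for producing RCLL supermartingale modifications. First, one checks that $(\bar J(t,y))_{t\in[0,T]}$ is a supermartingale: for $s\le t$, apply the dynamic programming principle with $\sigma=s$, $\tau=t$, $Y=y$ to get $\bar J(s,y)\ge E[\bar J(t,y+\int_s^t u(r)dr)\mid\mathcal{F}_s]$ for admissible $u$, and in particular choosing $u\equiv 0$ (which is in $U(s,y)$) gives $\bar J(s,y)\ge E[\int_s^t 0\cdot X(r)dr + \bar J(t,y)\mid\mathcal{F}_s]=E[\bar J(t,y)\mid\mathcal{F}_s]$; integrability follows from $0\le\bar J(t,y)\le L\,E[\int_t^T X(r)dr\mid\mathcal{F}_t]$ and \eqref{ass:square}. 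Given this, the classical martingale regularization theorem (e.g.\ right-continuity of $t\mapsto E[\bar J(t,y)]$ established in (a), the usual conditions on the filtration) guarantees that $\bar J(t,y)$ admits an RCLL modification $\hat J(t,y)$ which is again a supermartingale.

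The main obstacle is part (a): the right-continuity of $t\mapsto E[\bar J(t,y)]$ is exactly the extra hypothesis needed to upgrade an abstract supermartingale to an RCLL one, so it must be obtained by hand from the structure of the control problem. The delicate point is handling the two slightly mismatched control sets $U(t,y)$ and $U(t_n,y)$ in the $\esssup$, and controlling the effect of the random shift $\int_t^{t_n}u(r)dr$ in the terminal term uniformly over all admissible $u$; both are resolved by the a priori Lipschitz bound \eqref{eq:lip} and the integrability \eqref{ass:square}, but the bookkeeping requires care. Once (a) is in place, (b) is routine.
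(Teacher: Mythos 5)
Your proposal is correct, but part (a) takes a genuinely different route from the paper. The paper's argument is shorter and avoids the Lipschitz estimate entirely: after noting the supermartingale inequality $E[\bar J(t,y)]\ge E[\bar J(t_n,y)]$, it picks the optimal control $\bar u^{t,y}\in U(t,y)$ for $\bar J(t,y)$ (which exists by Proposition~\ref{prop:optimalcontrol}), restricts it to $[t_n,T]$ to get an element of $U(t_n,y)$, and then observes directly by dominated convergence that
\[
\liminf_{n}E[\bar J(t_n,y)]\ \ge\ \liminf_n E\Bigl[\int_{t_n}^T\bar u^{t,y}(s)X(s)\,ds\Bigr]\ =\ E\Bigl[\int_{t}^T\bar u^{t,y}(s)X(s)\,ds\Bigr]\ =\ E[\bar J(t,y)].
\]
You instead upper-bound $E[\bar J(t,y)]$ via the dynamic programming principle with $\tau=t_n$, paying for the shift $\int_t^{t_n}u\,dr$ in the terminal term through the Lipschitz bound~\eqref{eq:lip}. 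This is a valid sandwich argument, but it needs an extra tool (the a priori Lipschitz estimate) and a careful justification of the exchange of expectation and essential supremum (handled in the paper by the existence of an optimizer or by the upward-directedness established in the proof of Proposition~\ref{prop:optimalcontrol}); the paper's choice of control sidesteps the shift issue entirely because the restriction to $[t_n,T]$ lands in $U(t_n,y)$, not $U(t_n,y+\int_t^{t_n}u\,dr)$. Your derivation of the supermartingale property from DPP with $u\equiv0$ is fine and equivalent to the paper's observation that restricting a control in $U(t_2,y)$ onto $[t_2,T]$ yields an admissible competitor for $\bar J(t_1,y)$; part (b) is then, as you say, the standard regularization theorem (Theorem 1.3.13 in Karatzas--Shreve) given (a).
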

\begin{proof}
We fix some $y\in (-\infty,1]$. Notice first that $\bar J(t,y)$ is a supermartingale on $[0,T]$, because $U(t_2,y)\subset U(t_1,y)$ for $0\leq t_1\leq t_2\leq T$.
Hence, by Theorem 1.3.13 in \cite{KS1991}, (a) implies (b). For (a) we fix some $t\in[0,T)$ and choose a sequence $(t_n)\subset [0,T]$ such that
$t_n \downarrow t$. By the supermartingale property we have $E[\bar J(t,y)]\geq E[\bar J(t_n,y)]$. So it is sufficient to show that
$\liminf_{n\rightarrow \infty} E[\bar J(t_n,y)] \geq E[\bar J(t,y)]$. To this end we choose an optimal strategy $\bar u^{t,y}\in U(t,y)$ for $\bar J(t,y)$. Then,
$u_n=\bar u^{t,y} {\bf 1}_{[t_n,T]}\in U(t_n,y)$. Therefore, by dominated convergence,
$$
\liminf_{n\rightarrow \infty} E[\bar J(t_n,y)]\geq \liminf_{n\rightarrow \infty} E[\int_{t_n}^T \bar u^{t,y}(s)ds]=E[\int_{t}^T \bar u^{t,y}(s)ds]=\bar J(t,y).
$$
\end{proof}

\begin{proof}[Proof of Proposition \ref{prop:goodversion}.]
Let
$$
Q_1:=([0,T]\cap \mathbb{Q})\cup \{T\},\quad Q_2:=(-\infty,1] \cap \mathbb{Q}
$$
We choose a set $\bar \Omega$ with $P(\bar \Omega)=1$ such that the following properties hold on $\bar \Omega$:
\begin{itemize}
 \item[(i)] $Z^*:=\sup_{r\in [0,T]} Z(r)<\infty$.
  \item[(ii)] $\hat J(t,y)=\bar J(t,y)$ for every $(t,y)\in Q_1\times Q_2$ (where $\hat J$ was constructed in the previous lemma).
\item[(iii)] The mapping $t\mapsto \hat J(t,y)$ is RCLL for every $y\in Q_2$.
 \item[(iv)] For every $(t,y_1,y_2)\in Q_1\times Q_2^2$ it holds that
$$
|\bar J(t,y_1)-\bar J(t, y_2)|\leq Z^* |y_1-y_2|.
$$
\item[(v)] For every $(t,y_1,y_2)\in Q_1\times Q_2^2$ it holds that
$$
\bar J(t,\frac{y_1+y_2}{2})\geq \frac{\bar J(t,y_1)+\bar J(t,y_2)}{2}.
$$
\end{itemize}
We briefly check that such a set $\bar \Omega$ exists. The martingale $E[\sup_{0\leq r \leq T} X(r)|\mathcal{F}_t]$ has an RCLL
modification which we denote $Z(t)$. By Doob's inequality it satisfies
$$
E[\sup_{0\leq t\leq T} Z(t)^p]\leq \left(\frac{p}{p-1}\right)^p E[Z(T)^p] =\left(\frac{p}{p-1}\right)^p E[\sup_{0\leq t \leq T} X(t)^p]<\infty.
$$
Hence, the random variable $Z^*$ is almost surely finite. Moreover, (ii) and (iii) can be realized by the previous lemma, because
$Q_1$ and $Q_2$ are countable. The same applies to (iv) and (v) in view of Lemma \ref{lem:properties}.

On $\bar \Omega$ we wish to define $J(t,y)$ in the following way:
In a first step we define $J(t,y)=\hat J(t,y)$ for $(t,y)\in Q_1\times Q_2$. In a second step we let
$$
J(t,y)= \lim_{Q_2\ni\tilde y \rightarrow y} \hat J( t, \tilde y)
$$
for $t\in Q_1, y \in (-\infty,1] \setminus Q_2$. Then, $J(t,y)$ is defined on $Q_1\times (-\infty,1]$. In the final step we set
$$
J(t,y)= \lim_{Q_1\ni\tilde t \downarrow t}  J(\tilde t,y)
$$
for $t\in [0,T]\setminus Q_1$ and $y\in [0,1]$.

So we first have to show that the limits in the above construction exist on $\bar \Omega$. Fix $t \in Q_1$ and $y \in (-\infty,1]\setminus Q_2$. We choose a sequence
$(\tilde y_n)\subset Q_2$ such that $\tilde y_n\rightarrow y$. Then, by (ii) and (iv),
$$
|\hat J(t, \tilde y_n)-\hat J( t, \tilde y_m)|\leq Z^* |\tilde y_n -\tilde y_m|
$$
In view of (i), $(\hat J( t, \tilde y_n))$ is a Cauchy sequence and, as its limit does certainly not depend on the choice of the sequence,
we see that $\lim_{Q_2\ni\tilde y \rightarrow y} \hat J( t, \tilde y)$ exists. Hence $J(t,y)$ is well-defined on $Q_1\times (-\infty,1]$. Moreover, it is straightforward to check that
for $t\in Q_1$ and $(y_1,y_2)\in (-\infty,1]^2$
$$
|J(t,y_1)-J(t,y_2)| \leq Z^* |y_1-y_2|
$$
holds true.

Now we fix some $t\in [0,T]\setminus Q_1$ and some $y\in (-\infty,1]$. We choose sequences $(t_n)\subset Q_1$ and $(y_k)\subset Q_2$ such
that $t_n \downarrow t$ and $y_k\rightarrow y$. Then,
\begin{eqnarray}\label{eq:hilf3}
 && |J(t_n,y)-J(t_m,y)| \nonumber\\
&\leq & |\hat J(t_n,y_k)-\hat J(t_m,y_k)| + |J(t_n,y)-J(t_n,y_k)| + |J(t_m,y)-J(t_m,y_k)| \nonumber \\
&\leq & |\hat J(t_n,y_k)-\hat J(t_m,y_k)| + 2 Z^* |y-y_k|
\end{eqnarray}
 By (iii) we can conclude that the sequence $(J(t_n,y))$ is Cauchy, and, hence, $\lim_{Q_1\ni\tilde t \downarrow t} \bar J(\tilde t,y)$
exists, because the limit does not depend on the approximating sequence. So $J$ is well-defined.

We now prove that $J$ satisfies the properties stated in b) on $\bar
\Omega$. The Lipschitz property b2) can be immediately transferred
from $t\in Q_1$ (for which it was shown above)  to general $t$ by
the construction of $J$. Property b1), which states that $J$ has
RCLL paths in $t$, can be shown by a similar argument as in
(\ref{eq:hilf3}). It remains to show concavity in $y$. As $J(t,y)$
is continuous in $y$ for fixed $t$, it is sufficient to show that
$$
J(t,\frac{y_1+y_2}{2})\geq \frac{J(t,y_1)+J(t,y_2)}{2}
$$
holds for every $(t,y)\in [0,T]\times (-\infty,1]$. By (v) and the construction of $J$, it is valid for $(t,y)\in Q_1\times Q_2$. By the continuity
properties of $\bar J$ this immediately extends to general $(t,y)$.

It remains to prove a). Suppose $\tau$ is a $[0,T]$-valued stopping time and $Y$ is a $\mathcal{F}_\tau$ measurable, $(-\infty,1]$-valued random variable. We
can approximate $Y$ by a nonincreasing sequence $(Y_n)$ of $Q_2$-valued, $\mathcal{F}_\tau$ measurable random variables. Moreover, we can choose a sequence
$(\tau_n)$ of $Q_1$-valued stopping times such that $\tau_n$ converges nonincreasingly to $\tau$. By (ii) and the continuity properties b1) and b2)
of $J$, we get on $\bar \Omega$
\begin{displaymath}
  J(\tau,Y)=\lim_{n\rightarrow \infty} J(\tau_n, Y_n)= \lim_{n \rightarrow \infty} \bar J(\tau_n, Y_n).
\end{displaymath}
So it remains to show that
\begin{equation}\label{eq:hilf4}
\lim_{n \rightarrow \infty} \bar J(\tau_n, Y_n)=\bar J(\tau,Y),\quad P-a.s.
\end{equation}
As $U(\tau_n,Y_n)\subset U(\tau,Y)$ we observe that
$$
E[\bar J(\tau_n,Y_n)|\mathcal{F}_{\tau}] \leq \bar J(\tau,Y)
$$
Hence, we have that for every $A\in \mathcal{F}_\tau$
\begin{equation*}
 E[{\bf 1}_A \bar J(\tau_n,Y_n)] \leq E[{\bf 1}_A \bar J(\tau,Y)],
\end{equation*}
which implies
\begin{equation}\label{eq:hilf5}
 \limsup_{n\rightarrow \infty}  \bar J(\tau_n,Y_n) \leq \bar J(\tau,Y)
\end{equation}
Now choose some optimal strategy $\bar u^{\tau,Y} \in U(\tau,Y)$ for $\bar J(\tau,Y)$ and define
$$
u_n(t)={\bf 1}_{[\tau_n,\sigma_n]}(t) \bar u^{\tau,Y}(t)
$$
where
$$
\sigma_n=\inf\{t\geq \tau_n;\; \int_{\tau}^t \bar u^{\tau,Y}(s) ds \geq 1-Y_n\}\wedge T.
$$
Then $u_n \in U(\tau_n,Y_n)$. As $Y_n$ is nonincreasing, the sequence of stopping times $\sigma_n$ is nondecreasing. Denoting its limit by
$\sigma$ we obtain that
$$
\int_{\tau}^\sigma \bar u^{\tau,Y}(s) ds =1-Y
$$
or $\sigma=T$. As  $\bar u^{\tau,Y} \in U(\tau,Y)$, we have in any case that
$$
\int_\sigma^T \bar u^{\tau,Y}(s)ds=0.
$$
Thus,
\begin{eqnarray*}
 && E[\bar J(\tau_n,Y_n)|\mathcal{F}_{\tau}]\geq  E[\int_{\tau_n}^T u_n(s)X(s)ds|\mathcal{F}_{\tau}] \\
&=& \bar J(\tau,Y)-E[\int_\tau^{\tau_n} \bar u^{\tau,Y}(s)X(s)ds|\mathcal{F}_{\tau}] -E[\int_{\sigma_n}^\sigma \bar u^{\tau,Y}(s)X(s)ds|\mathcal{F}_{\tau}].
\end{eqnarray*}
As $\tau_n\downarrow \tau$ and $\sigma_n\uparrow \sigma$, the dominated convergence theorem yields for every $A\in\mathcal{F}_\tau$
$$
\liminf_{n\rightarrow \infty} E[{\bf 1}_A \bar J(\tau_n,Y_n)] \geq E[{\bf 1}_A \bar J(\tau,Y)],
$$
which in turn implies
$$
\liminf_{n\rightarrow \infty}  \bar J(\tau_n,Y_n) \geq \bar J(\tau,Y)
$$
and in view of (\ref{eq:hilf5}) finishes the proof of (\ref{eq:hilf4}).
\end{proof}

\section{Some properties of the marginal value}

By Proposition \ref{prop:goodversion}, there is a set $\Bar \Omega$ of full measure such that for all $(t,y)$ the left-hand side
derivative $D^-_y J(t,y)$ and the right-hand side derivative $D^+_y J(t,y)$ in $y$-direction exist on $\bar \Omega$ due to concavity.
In order to study the  marginal values  $-D^-_y J(t,y)$ and $-D^+_y J(t,y)$, we first derive some properties
related to the difference process $J(t,y+h)-J(t,y)$.

\begin{prop}\label{prop:marginal}
 Suppose $\tau$ is a $[0,T]$-valued stopping time and $Y_2\geq Y_1$ are $\mathcal{F}_\tau$-measurable, $(-\infty,1]$-valued random variables. Denote
by $\bar u^{\tau,Y_1}\in U'(\tau, Y_1)$, $\bar u^{\tau,Y_2}\in U'(\tau,Y_2)$ optimal controls for $\bar J(\tau,Y_1)$ and  $\bar J(\tau,Y_1)$, respectively.
Then,
\\ (i) It holds that
$$
\bar J(\tau,Y_1)-\bar J(\tau,Y_2)=\esssup_{u \in \tilde U(\bar u^{\tau,Y_2},Y_2-Y_1)} E[\int_t^T u(r)X(r) dr|\mathcal{F}_\tau],
$$
where $\tilde U(\bar u^{\tau,Y_2},Y_2-Y_1)$ denotes the set of adapted processes $u$ such that $\int_\tau^T u(r)dr \leq Y_2-Y_1$ and
$0\leq u(r)\leq L-\bar u^{\tau,Y_2}(r)$ for $r\in [\tau,T]$.
\\ (ii) Define $\bar u$ on $[\tau,T]$ by
$$
\bar u(r)= (\bar u^{\tau,Y_1}(r)-\bar u^{\tau,Y_2}(r))_+{\bf 1}_{\{r;\; \int_\tau^r
(\bar u^{\tau,Y_1}(s)-\bar u^{\tau,Y_2}(s))_+ ds\leq Y_2-Y_1\}}
$$
Then, $\bar u\in \tilde U(\bar u^{\tau,Y_2},Y_2-Y_1)$. Moreover, $\bar u+  \bar u^{\tau,Y_2}\in U'(\tau,Y_1)$ and is an optimal control for $\bar J(\tau,Y_1)$.
\end{prop}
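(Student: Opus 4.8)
The plan is to prove (i) and (ii) simultaneously, since the control $\bar u$ from (ii) will turn out to realise the essential supremum in (i). Write $w:=\bar u^{\tau,Y_1}$, $v:=\bar u^{\tau,Y_2}$ and introduce the stopping time
$$
\rho:=\inf\Big\{r\geq\tau;\ \int_\tau^r (w(s)-v(s))_+\,ds\geq Y_2-Y_1\Big\}\wedge T,
$$
so that $\bar u=(w-v)_+{\bf 1}_{[\tau,\rho]}$ and $\int_\tau^T\bar u(r)\,dr=\min\big(Y_2-Y_1,\int_\tau^T(w-v)_+\,dr\big)$. I would begin with two elementary admissibility observations: since $0\leq(w-v)_+\leq L-v$ pointwise and $\int_\tau^T\bar u\,dr\leq Y_2-Y_1$ by construction, one has $\bar u\in\tilde U(v,Y_2-Y_1)$; conversely, for any $u\in\tilde U(v,Y_2-Y_1)$ the process $u+v$ is $[0,L]$-valued and $\int_\tau^T(u+v)\,dr\leq(Y_2-Y_1)+(1-Y_2)=1-Y_1$, so $u+v\in U(\tau,Y_1)$.

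The direction ``$\geq$'' of (i) is then immediate: for $u\in\tilde U(v,Y_2-Y_1)$, admissibility of $u+v$ together with optimality of $v$ gives $\bar J(\tau,Y_1)\geq E[\int_\tau^T(u+v)X\,dr|\mathcal{F}_\tau]=E[\int_\tau^T uX\,dr|\mathcal{F}_\tau]+\bar J(\tau,Y_2)$, and one takes the essential supremum over $u$. For the reverse inequality — the main point — the idea is to use $w$ minus the truncated difference as a competitor for $\bar J(\tau,Y_2)$: set $w'':=w-\bar u$, which equals $\min(w,v)$ on $[\tau,\rho]$ and $w$ on $(\rho,T]$, hence is $[0,L]$-valued; and for the volume constraint, argue pathwise: if $\int_\tau^T(w-v)_+\,dr\geq Y_2-Y_1$ then $\int_\tau^T\bar u\,dr=Y_2-Y_1$, so $\int_\tau^T w''\,dr=\int_\tau^T w\,dr-(Y_2-Y_1)\leq(1-Y_1)-(Y_2-Y_1)=1-Y_2$; otherwise $w''=\min(w,v)\leq v$ $\lambda_{[0,T]}$-a.e., so $\int_\tau^T w''\,dr\leq\int_\tau^T v\,dr\leq1-Y_2$. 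In either case $w''\in U(\tau,Y_2)$, whence $\bar J(\tau,Y_2)\geq E[\int_\tau^T w''X\,dr|\mathcal{F}_\tau]=\bar J(\tau,Y_1)-E[\int_\tau^T\bar u X\,dr|\mathcal{F}_\tau]$ by optimality of $w$. Combining this with ``$\geq$'' and with $\bar u\in\tilde U(v,Y_2-Y_1)$ forces $\bar J(\tau,Y_1)-\bar J(\tau,Y_2)\leq E[\int_\tau^T\bar u X\,dr|\mathcal{F}_\tau]\leq\esssup_{u\in\tilde U(v,Y_2-Y_1)}E[\int_\tau^T uX\,dr|\mathcal{F}_\tau]\leq\bar J(\tau,Y_1)-\bar J(\tau,Y_2)$ to be a chain of equalities, which proves (i) and identifies $E[\int_\tau^T\bar u X\,dr|\mathcal{F}_\tau]=\bar J(\tau,Y_1)-\bar J(\tau,Y_2)$.

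Part (ii) then follows at once: $\bar u\in\tilde U(v,Y_2-Y_1)$ and $\bar u+v\in U(\tau,Y_1)$ were already noted, and $E[\int_\tau^T(\bar u+v)X\,dr|\mathcal{F}_\tau]=E[\int_\tau^T\bar u X\,dr|\mathcal{F}_\tau]+\bar J(\tau,Y_2)=\bar J(\tau,Y_1)$, so $\bar u+v$ is optimal for $\bar J(\tau,Y_1)$. It remains to show $\bar u+v\in U'(\tau,Y_1)$; here I would check the characterisation of Proposition~\ref{prop:U'}(ii), distinguishing according to the value of $L(T-\tau)$ and using $w\in U'(\tau,Y_1)$, $v\in U'(\tau,Y_2)$. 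On $\{L(T-\tau)\leq 1-Y_2\}$, $v\equiv L$ on $[\tau,T]$, so $\bar u\equiv 0$ and $\bar u+v\equiv L$. On $\{1-Y_2\leq L(T-\tau)\leq 1-Y_1\}$, $w\equiv L$ on $[\tau,T]$ and $\int_\tau^T v\,dr=1-Y_2$, so $(w-v)_+=L-v$ with $\int_\tau^T(L-v)\,dr=L(T-\tau)-(1-Y_2)\leq Y_2-Y_1$, forcing $\bar u+v\equiv L$ ($\lambda_{[0,T]}\otimes P$-a.e.). On $\{L(T-\tau)\geq 1-Y_1\}$, $\int_\tau^T w\,dr=1-Y_1$ and $\int_\tau^T v\,dr=1-Y_2$, so $\int_\tau^T(w-v)_+\,dr\geq\int_\tau^T(w-v)\,dr=Y_2-Y_1$, hence $\int_\tau^T\bar u\,dr=Y_2-Y_1$ and $\int_\tau^T(\bar u+v)\,dr=1-Y_1$.

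The step I expect to be the real obstacle is the upper bound in (i): the insight needed is that subtracting the truncated difference $\bar u$ from the optimal $w$ always lands back in $U(\tau,Y_2)$ — in the truncated regime because the used volume decreases by exactly $Y_2-Y_1$, and in the untruncated regime because $w''=\min(w,v)\leq v$ makes the budget constraint automatic. The verification of $\bar u+v\in U'(\tau,Y_1)$ in (ii) is then a somewhat lengthy but routine case distinction on $L(T-\tau)$ built on Proposition~\ref{prop:U'}(ii).
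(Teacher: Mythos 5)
Your proof is correct and follows the same overall skeleton as the paper's: prove (i) and (ii) together, get the lower bound from the admissibility of $u+\bar u^{\tau,Y_2}$ in $U(\tau,Y_1)$, and get the upper bound by exhibiting $\bar u^{\tau,Y_1}-\bar u$ as an admissible competitor in $U(\tau,Y_2)$, which then forces the chain of inequalities to collapse. The genuine difference is in how you verify $w''=\bar u^{\tau,Y_1}-\bar u\in U(\tau,Y_2)$. The paper splits into three cases on the events $\{L(T-\tau)\leq 1-Y_2\}$, $\{1-Y_2<L(T-\tau)\leq 1-Y_1\}$, $\{L(T-\tau)>1-Y_1\}$ and leans on Proposition~\ref{prop:U'}(ii) to pin down the exact volumes used by the $U'$-controls. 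Your pathwise dichotomy — whether $\int_\tau^T(w-v)_+\,dr$ exceeds $Y_2-Y_1$ (truncation active, volume decreases by exactly $Y_2-Y_1$) or not (no truncation, then $w''=\min(w,v)\leq v$ makes the budget automatic) — is cleaner and, notably, uses only $w\in U(\tau,Y_1)$, $v\in U(\tau,Y_2)$ rather than the $U'$-property for this step. The $U'$-hypothesis on $w$ and $v$ is then needed only, as you observe, for the final verification that $\bar u+v\in U'(\tau,Y_1)$, which you carry out via the same three-way case distinction that the paper waves at ("it is easy to see"); your explicit check is a welcome filling-in. Net effect: same structure, a more economical proof of the key admissibility claim, and a more explicit conclusion.
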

\begin{proof}
 We prove both items at the same time. Let $u\in \tilde U(\bar u^{\tau,Y_2},Y_2-Y_1)$. Then, it is straightforward to check
that $ \bar u^{\tau,Y_2}+u\in U(\tau,Y_1)$. Hence,
$$
\bar J(\tau,Y_1)\geq E[\int_\tau^T (\bar u^{\tau,Y_2}(s)+u(s))X(s)ds|\mathcal{F}_\tau]=\bar J(\tau, Y_2)+E[\int_\tau^T u(s)X(s)ds|\mathcal{F}_\tau].
$$
This implies
\begin{equation}\label{eq:hilf6}
\bar J(\tau,Y_1)-\bar J(\tau,Y_2)\geq\esssup_{u \in \tilde U(\bar u^{\tau,Y_2},Y_2-Y_1)} E[\int_t^T u(r)X(r) dr|\mathcal{F}_\tau].
\end{equation}
We will show that $\bar u$, defined in (ii), satisfies
\begin{eqnarray}
 \bar u &\in& \tilde U(\bar u^{\tau,Y_2},Y_2-Y_1) \label{eq:hilf7}\\
 \bar u^{\tau,Y_1}- \bar u&\in& U(\tau, Y_2) \label{eq:hilf8}
\end{eqnarray}
This proves (i), because
\begin{eqnarray*}
 \bar J(\tau,Y_1)&=&E[\int_\tau^T \bar u^{\tau,Y_1}(s)X(s)ds|\mathcal{F}_\tau] \\
&=& E[\int_\tau^T (\bar u^{\tau,Y_1}(s)-\bar u(s))X(s)ds|\mathcal{F}_\tau] +E[\int_t^T \bar u(s)X(s) ds|\mathcal{F}_\tau]
\\ &\leq & \bar J(\tau,Y_2)+ \esssup_{u \in \tilde U(\bar u^{\tau,Y_2},Y_2-Y_1)} E[\int_t^T u(r)X(r) dr|\mathcal{F}_\tau]
\end{eqnarray*}
In view of (\ref{eq:hilf6}), the inequality turns into an identity. Hence we obtain (i) and the optimality of $\bar u$ for the problem $\bar J(\tau,Y_1)-\bar J(\tau,Y_2)$.  This implies
optimality of $\bar u+  \bar u^{\tau,Y_2}$ for $\bar J(\tau,Y_1)$, because
$$
\bar J(\tau,Y_1)=\bar J(\tau,Y_2)+E[\int_t^T \bar u(s)X(s) ds|\mathcal{F}_\tau]=E[\int_t^T (\bar u^{\tau,Y_2}(s)+\bar u(s))X(s) ds|\mathcal{F}_\tau].
$$
We will now  verify (\ref{eq:hilf7}) and (\ref{eq:hilf8}). Notice first that (\ref{eq:hilf7}) is rather obvious, because
$$
\int_\tau^T \bar u(r)dr \leq Y_2-Y_1
$$
by construction, and
$$
0\leq \bar u(r) \leq \bar u^{\tau,Y_1}(r)-\bar u^{\tau,Y_2}(r) \leq L-\bar u^{\tau,Y_2}(r)
$$
for $r\in [\tau,T]$.

We prove (\ref{eq:hilf8}) on the sets  $\{L(T-\tau)\leq 1-Y_2\}$,  $\{L(T-\tau)>1-Y_2\}\cap \{L(T-\tau)\leq 1-Y_1\}$ and $\{L(T-\tau)>1-Y_1\}$ separately.
On the set $\{L(T-\tau)\leq 1-Y_2\}$, we get $\bar u^{\tau,Y_1}(r)=\bar u^{\tau,Y_2}(r)=L$  for $r\in [\tau,T]$ by Proposition \ref{prop:U'}. Hence, $\bar u(r)=0$ and $\bar u^{\tau,Y_1}-\bar u= \bar u^{\tau,Y_2}\in U(\tau, Y_2)$.
On the set $\{L(T-\tau)>1-Y_2\}\cap \{L(T-\tau)\leq 1-Y_1\}$, we get $\bar u^{\tau,Y_1}(r)=L$  for $r\in [\tau,T]$ and $\int_\tau^T \bar u^{\tau,Y_2}(r)dr=1-Y_2$ by Proposition \ref{prop:U'}.
Hence, we obtain on this set, for every $r\in [\tau,T]$,
$$
\int_\tau^r (\bar u^{\tau,Y_1}(s)-\bar u^{\tau,Y_2}(s))_+ ds \leq L(T-\tau)-\int_\tau^T \bar u^{\tau,Y_2}(s) ds \leq 1-Y_1-(1-Y_2)=Y_2-Y_1.
$$
This again implies $\bar u^{\tau,Y_1}-\bar u= \bar u^{\tau,Y_2}\in U(\tau, Y_2)$. On the set
$\{L(T-\tau)>1-Y_1\}$, we already know, by Proposition \ref{prop:U'}, that
$$
\int_\tau^T \bar u^{\tau,Y_1}(s)ds=1-Y_1,\quad  \int_\tau^T \bar u^{\tau,Y_2}(s)ds=1-Y_2.
$$
Hence,
$$
\int_\tau^T (\bar u^{\tau,Y_1}(s)-\bar u^{\tau,Y_2}(s)) ds = 1-Y_1-(1-Y_2)=Y_2-Y_1,
$$
which yields
$$
\int_\tau^T \bar u(s)ds= Y_2-Y_1.
$$
Consequently,
$$
\int_\tau^T (\bar u^{\tau,Y_1}(s)-\bar u(s))ds =1-Y_1-(Y_2-Y_1)=1-Y_2
$$
Moreover,
$$
0\leq \min\{\bar u^{\tau,Y_2}(r), \bar u^{\tau,Y_1}(r)\}  \leq \bar u^{\tau,Y_1}(r)-\bar u(r) \leq L
$$
for $r\in [\tau,T]$. So, $u^{\tau,Y_1}-\bar u \in U(\tau,Y_2)$ also holds on $\{L(T-\tau)>1-Y_1\}$.

By the arguments in the proof of (\ref{eq:hilf8}) it is easy to see that $\bar u+\bar u^{\tau,Y_2}\in U'(\tau,Y_1)$ thanks to by Proposition \ref{prop:U'}.
\end{proof}

\begin{cor}
 Suppose $\sigma\leq \tau$  are $[0,T]$-valued stopping times and $Y$ is an $\mathcal{F}_\sigma$-measurable random variable with values in $(-\infty,1]$. Then there
are  optimal controls $\bar u^{\tau,Y}$ for $\bar J(\tau,Y)$ and $\bar u^{\sigma,Y}$ for $\bar J(\sigma,Y)$
such that $\bar u^{\tau,Y}(r) \geq \bar u^{\sigma,Y}(r)$ for $r\in [\tau,T]$. Moreover, $\bar u^{\tau,Y}$ can be chosen from the set $U'(\tau,Y)$.
\end{cor}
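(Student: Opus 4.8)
The plan is to chain together the dynamic programming principle (Proposition \ref{prop:dp}) and the difference–process representation of Proposition \ref{prop:marginal}, exploiting the standard fact that inserting a globally optimal control into the dynamic programming decomposition yields an optimal control for the remaining problem.

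First I would use Proposition \ref{prop:U'}(i) to fix an optimal control $\bar u^{\sigma,Y}\in U'(\sigma,Y)$ for $\bar J(\sigma,Y)$, and set $Y':=Y+\int_\sigma^\tau \bar u^{\sigma,Y}(s)\,ds$. Since $\bar u^{\sigma,Y}\in U(\sigma,Y)$ is nonnegative, $Y'$ is $\mathcal{F}_\tau$-measurable, $(-\infty,1]$-valued, and $Y'\geq Y$. The restriction $v:=\bar u^{\sigma,Y}{\bf 1}_{[\tau,T]}$ lies in $U(\tau,Y')$. I would then argue that $v$ is in fact optimal for $\bar J(\tau,Y')$: on the one hand $E[\int_\tau^T v(s)X(s)\,ds\mid\mathcal{F}_\tau]\leq \bar J(\tau,Y')$ by definition; on the other hand, applying Proposition \ref{prop:dp} with the control $\bar u^{\sigma,Y}$ together with its optimality and the tower property gives $\bar J(\sigma,Y)=E[\int_\sigma^\tau \bar u^{\sigma,Y}(s)X(s)\,ds\mid\mathcal{F}_\sigma]+E[E[\int_\tau^T v(s)X(s)\,ds\mid\mathcal{F}_\tau]\mid\mathcal{F}_\sigma]\leq E[\int_\sigma^\tau \bar u^{\sigma,Y}(s)X(s)\,ds+\bar J(\tau,Y')\mid\mathcal{F}_\sigma]\leq \bar J(\sigma,Y)$, so all inequalities are equalities; hence $\bar J(\tau,Y')-E[\int_\tau^T v(s)X(s)\,ds\mid\mathcal{F}_\tau]$ is a nonnegative random variable with vanishing $\mathcal{F}_\sigma$-conditional expectation, and is therefore $0$ $P$-a.s. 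I would also record that $v\in U'(\tau,Y')$: for $r\in[\tau,T]$ one has $Y'+\int_\tau^r v(s)\,ds=Y+\int_\sigma^r \bar u^{\sigma,Y}(s)\,ds$, so the constraint set in \eqref{U'} at level $(\tau,Y')$ coincides with the one at level $(\sigma,Y)$, on which $\bar u^{\sigma,Y}\equiv L$ because $\bar u^{\sigma,Y}\in U'(\sigma,Y)$.

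Next I would invoke Proposition \ref{prop:marginal}(ii) with $Y_1=Y$, $Y_2=Y'$, taking $\bar u^{\tau,Y_2}=v$ (an optimal control in $U'(\tau,Y')$ by the previous step) and $\bar u^{\tau,Y_1}$ any optimal control in $U'(\tau,Y)$, which exists by Proposition \ref{prop:U'}(i). The proposition then produces $\bar u\in\tilde U(v,Y'-Y)$, in particular $\bar u\geq 0$ on $[\tau,T]$, such that $\bar u+v\in U'(\tau,Y)$ and $\bar u+v$ is optimal for $\bar J(\tau,Y)$. Setting $\bar u^{\tau,Y}:=\bar u+v$ and keeping $\bar u^{\sigma,Y}$ as chosen, I obtain an optimal control for $\bar J(\tau,Y)$ lying in $U'(\tau,Y)$, with $\bar u^{\tau,Y}(r)=\bar u(r)+\bar u^{\sigma,Y}(r)\geq \bar u^{\sigma,Y}(r)$ for $r\in[\tau,T]$ since $\bar u\geq 0$, which is exactly the claim.

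I expect the only delicate point to be the optimality transfer in the second paragraph, namely that restricting the globally optimal $\bar u^{\sigma,Y}$ to $[\tau,T]$ yields a control optimal for the subproblem $\bar J(\tau,Y')$ with the induced random remaining volume $Y'$. This needs the dynamic programming inequality combined with the tower property to upgrade an $\mathcal{F}_\sigma$-conditional identity to an $\mathcal{F}_\tau$-conditional one; once that is in place, the remainder is bookkeeping with the constraint set $U'$ and a direct application of Proposition \ref{prop:marginal}.
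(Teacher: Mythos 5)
Your proof is correct and takes essentially the same route as the paper's: both reduce the claim to an application of Proposition \ref{prop:marginal}(ii) with $Y_1=Y$ and $Y_2=\tilde Y:=Y+\int_\sigma^\tau\bar u^{\sigma,Y}\,dr$, after producing an optimal control in $U'(\tau,\tilde Y)$ for the subproblem via the dynamic programming principle. The only difference is cosmetic: the paper simply picks a fresh optimal $u^{\tau,\tilde Y}\in U'(\tau,\tilde Y)$ and replaces the tail of $\bar u^{\sigma,Y}$ by it, whereas you keep the original tail and spend a few extra lines (the tower-property squeeze and the $U'$-membership check) to verify it already qualifies.
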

\begin{proof}
 Choose optimal controls $\bar u^{\sigma,Y}\in U'(\sigma,Y)$ for $\bar J(\sigma,Y)$ and $\bar u^{\tau,\tilde Y}\in U'(\tau,\tilde Y)$ for $\bar J(\tau,\tilde Y)$,
where $\tilde Y=Y+\int_\sigma^\tau u^{\sigma,Y}(r)dr$. By the
dynamic programming principle in Proposition \ref{prop:dp}, we
observe that
$$
\bar u^{\sigma,Y}=u^{\sigma,Y} {\bf 1}_{[\sigma,\tau)}+u^{\tau,\tilde Y}{\bf 1}_{[\tau,T)}
$$
is also optimal for $\bar J(\sigma,Y)$. As $\tilde Y\geq Y$, part (ii) of the  previous proposition implies that there is an optimal control
$\bar u^{\tau,Y}\in U'(\tau,Y)$ for $J(\tau,Y)$ such that $\bar u^{\tau,Y}(r) \geq  u^{\tau,\tilde Y}(r)$ for $r\in [\tau,T]$.
\end{proof}

The following proposition includes as a special case the statement that the difference process $\bar J(t,y+h)-\bar J(t,y)$ is
a submartingale for every $y\in (-\infty,1]$ and $h\in [0,1-y]$.

\begin{prop}\label{prop:submartingale}
 Suppose $\sigma\leq \tau$  are $[0,T]$-valued stopping times and $Y_1\leq Y_2$ are $\mathcal{F}_\sigma$-measurable, $(-\infty,1]$-valued random variables.
Then,
$$
E[\bar J(\tau,Y_2)-\bar J(\tau,Y_1)|\mathcal{F}_\sigma]\geq \bar J(\sigma,Y_2)-\bar J(\sigma,Y_1).
$$
\end{prop}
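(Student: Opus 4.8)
The plan is to rephrase the assertion as the statement that the marginal process $\bar J(\cdot,Y_1)-\bar J(\cdot,Y_2)$ (which is nonnegative since $Y_1\le Y_2$) dominates its own conditional expectation forward in time, i.e.
$$\bar J(\sigma,Y_1)-\bar J(\sigma,Y_2)\;\ge\; E\big[\bar J(\tau,Y_1)-\bar J(\tau,Y_2)\,\big|\,\mathcal F_\sigma\big],$$
which, after multiplying by $-1$, is exactly the claimed inequality. The two ingredients are the variational representation of the marginal in Proposition~\ref{prop:marginal}(i) and the nestedness of optimal controls provided by the Corollary stated after Proposition~\ref{prop:marginal}.

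First I would fix, via that Corollary applied to $Y=Y_2$, optimal controls $\bar u^{\tau,Y_2}\in U'(\tau,Y_2)$ for $\bar J(\tau,Y_2)$ and $\bar u^{\sigma,Y_2}$ for $\bar J(\sigma,Y_2)$ with $\bar u^{\tau,Y_2}(r)\ge \bar u^{\sigma,Y_2}(r)$ for $r\in[\tau,T]$; I also fix $\bar u^{\tau,Y_1}\in U'(\tau,Y_1)$, which exists by Proposition~\ref{prop:U'}(i), so that Proposition~\ref{prop:marginal}(i) applies. Now take any $u\in\tilde U(\bar u^{\tau,Y_2},Y_2-Y_1)$ and define $\hat u:=u\,{\bf 1}_{[\tau,T]}$ on $[\sigma,T]$. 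Then $\hat u$ is adapted, $\int_\sigma^T\hat u(r)dr=\int_\tau^T u(r)dr\le Y_2-Y_1$, and $0\le \hat u(r)\le L-\bar u^{\sigma,Y_2}(r)$ on $[\sigma,T]$: on $[\sigma,\tau)$ because $\hat u=0$ and $\bar u^{\sigma,Y_2}\le L$, and on $[\tau,T]$ because $u(r)\le L-\bar u^{\tau,Y_2}(r)\le L-\bar u^{\sigma,Y_2}(r)$. Hence $\hat u\in\tilde U(\bar u^{\sigma,Y_2},Y_2-Y_1)$, so $\bar u^{\sigma,Y_2}+\hat u\in U(\sigma,Y_1)$, and, exactly as in inequality \eqref{eq:hilf6} in the proof of Proposition~\ref{prop:marginal} with $\sigma$ in place of $\tau$, optimality of $\bar u^{\sigma,Y_2}$ gives
$$\bar J(\sigma,Y_1)-\bar J(\sigma,Y_2)\;\ge\; E\Big[\int_\sigma^T\hat u(r)X(r)dr\,\Big|\,\mathcal F_\sigma\Big]=E\Big[\int_\tau^T u(r)X(r)dr\,\Big|\,\mathcal F_\sigma\Big]=E\Big[E\big[\textstyle\int_\tau^T u(r)X(r)dr\,\big|\,\mathcal F_\tau\big]\,\Big|\,\mathcal F_\sigma\Big].$$

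It remains to pass to the supremum over $u$ on the right. The family $\mathcal M'=\{E[\int_\tau^T u(r)X(r)dr|\mathcal F_\tau]:u\in\tilde U(\bar u^{\tau,Y_2},Y_2-Y_1)\}$ is closed under pairwise maximization: for $u_1,u_2$ in this set put $A=\{E[\int_\tau^T u_1(r)X(r)dr|\mathcal F_\tau]\ge E[\int_\tau^T u_2(r)X(r)dr|\mathcal F_\tau]\}\in\mathcal F_\tau$ and note that $u_1{\bf 1}_A+u_2{\bf 1}_{A^c}$ again lies in $\tilde U(\bar u^{\tau,Y_2},Y_2-Y_1)$ and realizes the maximum. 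By Theorem~A.3 in \citet{KS1998} there is a nondecreasing sequence $(u_n)\subset\tilde U(\bar u^{\tau,Y_2},Y_2-Y_1)$ with $E[\int_\tau^T u_n(r)X(r)dr|\mathcal F_\tau]\uparrow \esssup_u E[\int_\tau^T u(r)X(r)dr|\mathcal F_\tau]=\bar J(\tau,Y_1)-\bar J(\tau,Y_2)$, the last equality being Proposition~\ref{prop:marginal}(i). Inserting $u=u_n$ into the displayed inequality and letting $n\to\infty$, conditional monotone convergence (the integrands are dominated by $L\int_0^T X(r)dr$, which is integrable under \eqref{ass:square}) yields $\bar J(\sigma,Y_1)-\bar J(\sigma,Y_2)\ge E[\bar J(\tau,Y_1)-\bar J(\tau,Y_2)|\mathcal F_\sigma]$, i.e. the assertion. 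The only genuinely delicate point is the interchange of $\esssup$ with the conditional expectation, handled by the lattice property of $\mathcal M'$; the rest is a routine check of admissibility, whose crux is that nesting the optimal $Y_2$-controls at $\sigma$ and $\tau$ makes the rate-cap constraint at the earlier time $\sigma$ weaker than the one at $\tau$, so admissible controls transport backward in time at no cost.
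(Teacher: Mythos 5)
Your proof is correct and rests on the same two ingredients as the paper's: the nestedness $\bar u^{\tau,Y_2}\ge\bar u^{\sigma,Y_2}$ on $[\tau,T]$ from the corollary following Proposition~\ref{prop:marginal}, and the $\tilde U$-parametrization of the marginal from Proposition~\ref{prop:marginal}. The admissibility check that $\hat u\in\tilde U(\bar u^{\sigma,Y_2},Y_2-Y_1)$ and hence $\bar u^{\sigma,Y_2}+\hat u\in U(\sigma,Y_1)$ is exactly right, and the lattice argument via Theorem~A.3 of \citet{KS1998} correctly handles the interchange of $\esssup$ with $E[\,\cdot\,|\mathcal F_\sigma]$. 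The only real difference from the paper is in the last step: the paper invokes Proposition~\ref{prop:marginal}(ii) to produce a single explicit element $\bar u^{\tau,Y_1}-\bar u^{\tau,Y_2}\in\tilde U(\bar u^{\tau,Y_2},Y_2-Y_1)$ that \emph{attains} $\bar J(\tau,Y_1)-\bar J(\tau,Y_2)$, plugs it into the transported control at $\sigma$, and the conditional expectations collapse to value functions in one line with no $\esssup$ to interchange at all. You instead go through the variational representation in part (i), take the $\esssup$ over the whole family, and then need the lattice/monotone-convergence machinery to push it through the conditional expectation. Both are valid; using part (ii) as the paper does simply short-circuits that last step, so the essence of the argument is the same and yours is just slightly less economical.
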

\begin{proof}
 By the previous corollary, we can choose optimal controls $\bar u^{\tau,Y_2}$ for $\bar J(\tau,Y_2)$ and $\bar u^{\sigma,Y_2}$ for $\bar J(\sigma,Y_2)$
such that $\bar u^{\tau,Y_2}(r) \geq \bar u^{\sigma,Y_2}(r)$ for $r\in [\tau,T]$ and $\bar u^{\tau,Y_2}\in U'(\tau,Y_2)$. Moreover, by Proposition \ref{prop:marginal}
we can choose $\bar u^{\tau,Y_1}$ optimal for $J(\tau,Y_1)$ such that $\bar u^{\tau,Y_1}- \bar u^{\tau,Y_2}\in \tilde U(\bar u^{\tau,Y_2},Y_2-Y_1)$.
Consequently,
$$
u(r):=\bar u^{\sigma,Y_2}(r)+ {\bf 1}_{[\tau,T]}(r) (\bar u^{\tau,Y_1}(r)- \bar u^{\tau,Y_2}(r))
$$
belongs to $U(\sigma,Y_1)$. This yields
\begin{eqnarray*}
\bar J(\sigma,Y_1)&\geq& E[\int_\sigma^T u(s)X(s)ds|\mathcal{F}_\sigma] \\ &=& E[\int_\sigma^T \bar u^{\sigma,Y_2}(s)X(s)ds|\mathcal{F}_\sigma]
+  E[\int_\tau^T \bar u^{\tau,Y_1}(s)X(s)ds|\mathcal{F}_\sigma] \\ && - E[\int_\tau^T \bar u^{\tau,Y_2}(s)X(s)ds|\mathcal{F}_\sigma] \\
&=& \bar J(\sigma, Y_2) + E[\bar J(\tau, Y_1)|\mathcal{F}_\sigma] -E[\bar J(\tau, Y_2)|\mathcal{F}_\sigma]
\end{eqnarray*}

\end{proof}

In view of Proposition \ref{prop:goodversion} and \ref{prop:submartingale} we immediately obtain the following result. It states that the marginal values
$-D^{\pm}_y J(t,y)$ are supermartingales, analogously to the situation for discrete time multiple stopping problems in \citet{MH} and \citet{Be1}.

\begin{cor}\label{cor:submartingale}
 (i) For every $y\in (-\infty,1]$, the left-hand side derivative $D^-_yJ(t,y)$ is a submartingale. \\
(ii) For every $y\in (-\infty,1)$, the right-hand side derivative $D^+_yJ(t,y)$ is a submartingale.
\end{cor}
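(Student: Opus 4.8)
\emph{Proof proposal.} The plan is to fix $y$ and study, for small $h>0$, the difference process $\Delta_h(t):=J(t,y+h)-J(t,y)$, to identify it as an integrable submartingale via Propositions \ref{prop:submartingale} and \ref{prop:goodversion}, and then to pass to the limit $h\downarrow 0$ in the appropriate difference quotients, using concavity in $y$ together with conditional monotone convergence.

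First I would transfer the submartingale property of the difference process from $\bar J$ to the good version $J$. Fix $y\in(-\infty,1)$, $h\in(0,1-y]$, and deterministic times $0\le s\le t\le T$. Applying Proposition \ref{prop:submartingale} with $\sigma=s$, $\tau=t$, $Y_1=y$, $Y_2=y+h$ (deterministic constants are trivially $\mathcal F_s$-measurable) and using Proposition \ref{prop:goodversion}(a), according to which $J$ and $\bar J$ agree almost surely at $(s,y),(s,y+h),(t,y),(t,y+h)$, one gets
$$
E\!\left[\Delta_h(t)\,\middle|\,\mathcal F_s\right]\ \ge\ \Delta_h(s),\qquad 0\le s\le t\le T.
$$
Moreover, since $\bar J$ (hence $J$) is non-increasing in its second argument, because $U(\tau,Y_2)\subset U(\tau,Y_1)$ whenever $Y_1\le Y_2$, and by the Lipschitz estimate in Proposition \ref{prop:goodversion}(b2), we have $-Z^*h\le\Delta_h(t)\le 0$ with $E[Z^*]<\infty$; thus $\Delta_h$ is an integrable submartingale. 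The same argument with $(Y_1,Y_2)=(y-h,y)$ shows that $\Delta^-_h(t):=J(t,y)-J(t,y-h)$ is an integrable submartingale for every $y\in(-\infty,1]$ and every $h>0$, with $-Z^*h\le\Delta^-_h(t)\le 0$.

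Next I would invoke concavity of $y\mapsto J(t,y)$ from Proposition \ref{prop:goodversion}(b3). For a concave function the forward slope $h\mapsto\tfrac1h\big(J(t,y+h)-J(t,y)\big)$ is non-increasing in $h$, hence increases to $D^+_yJ(t,y)$ as $h\downarrow 0$, while the backward slope $h\mapsto\tfrac1h\big(J(t,y)-J(t,y-h)\big)$ is non-decreasing in $h$, hence decreases to $D^-_yJ(t,y)$ as $h\downarrow 0$; all these quotients lie in $[-Z^*,0]$. For fixed $s\le t$, letting $h$ run through $\mathbb Q\cap(0,1-y]$ (the monotone limit is independent of the sequence) and applying the conditional monotone convergence theorem — for $D^+_y J$ the quotients $\tfrac1h\Delta_h(t)$ increase and are dominated below by the integrable $-Z^*$; for $D^-_y J$ the quotients $\tfrac1h\Delta^-_h(t)$ decrease and are dominated above by $0$ — yields
$$
E\!\left[D^+_yJ(t,y)\,\middle|\,\mathcal F_s\right]\ge D^+_yJ(s,y),\qquad
E\!\left[D^-_yJ(t,y)\,\middle|\,\mathcal F_s\right]\ge D^-_yJ(s,y).
$$
Integrability of $D^\pm_yJ(t,y)$ is immediate from $|D^\pm_yJ(t,y)|\le Z^*$, and adaptedness follows since $D^\pm_yJ(\cdot,y)$ is a pointwise limit on $\bar\Omega$ of the adapted processes $\tfrac1h\Delta_h$, $\tfrac1h\Delta^-_h$. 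This proves (i) for all $y\in(-\infty,1]$ and (ii) for $y\in(-\infty,1)$, the restriction in (ii) being forced by the requirement $y+h\le 1$.

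I do not expect a serious obstacle here — the result is essentially a corollary — but the two points that require a little care are: correctly passing from the almost-sure inequality for $\bar J$ in Proposition \ref{prop:submartingale} to the single fixed field $J$ on the common full-measure set $\bar\Omega$ (handled by Proposition \ref{prop:goodversion}(a) for constant $Y$ and deterministic times, and by using only countably many $h$), and getting the direction of monotonicity of the difference quotients right so that the appropriate one-sided integrable domination ($-Z^*$ from below for the increasing quotients defining $D^+_y J$, and $0$ from above for the decreasing quotients defining $D^-_y J$) is in place for the conditional monotone convergence theorem.
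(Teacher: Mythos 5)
Your proof is correct and follows the route the paper intends: the paper states this corollary is ``immediate'' from Propositions \ref{prop:goodversion} and \ref{prop:submartingale}, and you supply exactly the expected details --- the submartingale inequality for the difference processes $J(t,y\pm h)-J(t,y)$, the monotonicity of the one-sided difference quotients from concavity, and conditional monotone convergence using the Lipschitz bound $Z^*$ as the integrable dominating function, with the countable choice of $h$ handling the null sets.
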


We will now study the regularity of the one-sided derivatives $D^-_y  J(t,y)$ and $D^+_y   J(t,y)$ as processes in time.
 The following example is instructive to see what kind of results we can expect.
\begin{exmp}\label{exmp:nondiff}
 Suppose $\rho$ is a stopping time of the filtration $\mathbb{F}$ with values in $[0,T]$
and consider the RCLL process
$$
X(t)={\bf 1}_{[0,\rho)}(t)
$$
Then, certainly it is optimal to exercise as soon as possible, i.e. $\bar u^{t,y}=L{\bf 1}_{[t,t+(1-y)/L]}$ is an optimal control for
$J(t,y)$. Therefore,
$$
J(t,y)=E[\min(1-y, L(\rho-t))|\mathcal{F}_t]{\bf 1}_{\{\rho\geq t\}}.
$$
Thus, the one sided derivatives of $J$ are
\begin{eqnarray*}
 D^-_y J(t,y)&=& -E[{\bf 1}_{\{y> 1-L(\rho-t)\}}|\mathcal{F}_t]\\
D^+_y J(t,y)&=& -E[{\bf 1}_{\{y\geq 1-L(\rho-t)\}}|\mathcal{F}_t].
\end{eqnarray*}
It follows that
$$
E[D^+_y J(t,y)]=P(\{\rho<(1-y)/L+t\}-1.
$$
If the distribution function of $\rho$ has a jump at $(1-y_0)/L+t_0$, then the mapping $t\mapsto E[D^+_y J(t,y_0)]$ is not rightcontinuous
at $t_0$. This implies that $D^+_y J(t,y)$ does not admit a rightcontinuous version in $t$, if the distribution function of $\rho$ is
discontinuous. Contrarily
$$
E[D^-_y J(t,y)]=P(\{\rho\leq(1-y)/L+t\}-1
$$
is rightcontinuous in $t$ for every $y$. As $D^-_y J(t,y)$ is a submartingale for fixed $y$, we conclude, that, for every $y$,
$D^-_y J(t,y)$ has an RCLL modification.
\end{exmp}

\begin{prop}\label{prop:derivative}
 (i) For every $y\in (-\infty,1]$, the submartingale $D^-_y J(t,y)$ has an RCLL modification. \\
(ii)  $\lambda_{[0,T]}\otimes P(\{D^-_y J(\cdot,y)\neq D^+_y J(\cdot,y)\})=0$  for $\lambda_{(-\infty,1)}$-a.e. $y$.
\end{prop}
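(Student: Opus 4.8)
\emph{Proof plan.} The plan is to treat the two items separately, leaning on the submartingale property of $D^-_y J(\cdot,y)$ (Corollary~\ref{cor:submartingale}), the rightcontinuity of $t\mapsto E[\bar J(t,y)]$ (Lemma~\ref{lem:rc1}(a)), and the concavity of $y\mapsto J(t,y)$ on $\bar\Omega$ (Proposition~\ref{prop:goodversion}).

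For (i), I would follow the scheme of Lemma~\ref{lem:rc1}: by Theorem 1.3.13 in \cite{KS1991}, applied to the supermartingale $-D^-_y J(\cdot,y)$, it is enough to prove that $t\mapsto E[D^-_y J(t,y)]$ is rightcontinuous. The point is that concavity of $y\mapsto J(t,y)$ forces the difference quotients $h\mapsto h^{-1}\bigl(J(t,y)-J(t,y-h)\bigr)$ to be nondecreasing in $h>0$ and to decrease to $D^-_y J(t,y)$ as $h\downarrow 0$; since they are all dominated in absolute value by the integrable variable $\sup_{r\in[0,T]}Z(r)$, monotone convergence gives $E[D^-_y J(t,y)]=\inf_{h>0}g_h(t)$ with $g_h(t):=h^{-1}\bigl(E[\bar J(t,y)]-E[\bar J(t,y-h)]\bigr)$, where we used that $E[J(t,y)]=E[\bar J(t,y)]$ by Proposition~\ref{prop:goodversion}. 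By Lemma~\ref{lem:rc1}(a) every $g_h$ is rightcontinuous in $t$, so for $t_n\downarrow t$ and each fixed $h$ one gets $\limsup_n E[D^-_y J(t_n,y)]\le \limsup_n g_h(t_n)=g_h(t)$, hence $\limsup_n E[D^-_y J(t_n,y)]\le \inf_h g_h(t)=E[D^-_y J(t,y)]$. The matching lower bound $\liminf_n E[D^-_y J(t_n,y)]\ge E[D^-_y J(t,y)]$ is immediate from the submartingale property (Corollary~\ref{cor:submartingale}(i)), and rightcontinuity follows.

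For (ii), the plan is a Fubini--Tonelli argument. First I would record that, on $\bar\Omega$, the random field $J$ is $\mathcal{B}[0,T]\otimes\mathcal{B}(-\infty,1]\otimes\mathcal{F}$-measurable (being rightcontinuous in $t$, continuous in $y$, and $\mathcal{F}$-measurable in $\omega$ for each $(t,y)$), so that $D^-_y J(t,y)=\lim_n n\bigl(J(t,y)-J(t,y-1/n)\bigr)$ and $D^+_y J(t,y)=\lim_n n\bigl(J(t,y+1/n)-J(t,y)\bigr)$, the limits existing by concavity, are jointly measurable. Hence $A:=\{(t,\omega,y)\in[0,T]\times\bar\Omega\times(-\infty,1):\ D^-_y J(t,y)\ne D^+_y J(t,y)\}$ lies in the product $\sigma$-field. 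For each fixed $(t,\omega)\in[0,T]\times\bar\Omega$ the $y$-section of $A$ is contained in the at most countable set of points where the concave function $y\mapsto J(t,y)$ fails to be differentiable, hence is $\lambda_{(-\infty,1)}$-null; by Tonelli's theorem $(\lambda_{[0,T]}\otimes P\otimes\lambda_{(-\infty,1)})(A)=0$. Applying Tonelli once more, integrating in $y$ last, gives $\int_{(-\infty,1)}(\lambda_{[0,T]}\otimes P)(\{(t,\omega):(t,\omega,y)\in A\})\,dy=0$, so the nonnegative integrand vanishes for $\lambda_{(-\infty,1)}$-a.e.\ $y$, which is precisely the assertion.

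I do not expect a serious obstacle. In (i) the only care needed is the direction of monotonicity of the difference quotients and the observation that an infimum of rightcontinuous functions is merely upper semicontinuous from the right, which is why the submartingale inequality must supply the matching lower bound; in (ii) the only nontrivial point is the joint measurability of the one-sided derivatives, which is what makes the two applications of Tonelli legitimate.
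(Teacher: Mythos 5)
Your part (ii) is essentially the paper's own argument (Fubini--Tonelli, plus the observation that the non-differentiability set of a concave function in $y$ is countable for each fixed $(t,\omega)\in[0,T]\times\bar\Omega$); the paper phrases it on $[1-LT,1]$ rather than $(-\infty,1)$ because it records first that $D^\pm_y J(t,y)=0$ for $y<1-LT$, but this is cosmetic. Your part (i), however, takes a genuinely different and arguably cleaner route. The paper fixes a sequence $t+\Delta_n\downarrow t$, integrates $E[D^-_y J(t+\Delta_n,y)]-E[D^-_y J(t,y)]$ over $y\in[1-LT,1]$, uses that this integral collapses to $E[J(t,1-LT)-J(t+\Delta_n,1-LT)]\to 0$, deduces rightcontinuity of the mean for $\lambda$-a.e.\ $y$, and then propagates it to all $y$ by a monotonicity/approximation argument in the $y$-variable. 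You instead work at the fixed $y$ from the outset: $E[D^-_y J(t,y)]=\inf_{h>0}g_h(t)$ with $g_h(t)=h^{-1}\bigl(E[\bar J(t,y)]-E[\bar J(t,y-h)]\bigr)$, each $g_h$ rightcontinuous in $t$ by Lemma~\ref{lem:rc1}(a); upper semicontinuity from the right of the infimum gives the $\limsup$-inequality, and the submartingale monotonicity of the mean function supplies the $\liminf$-inequality. Your version avoids the ``a.e.\ $y$, then approximate in $y$'' detour entirely. Two small points of care that you already flag correctly and that are worth keeping explicit in a write-up: the difference quotient $h^{-1}(J(t,y)-J(t,y-h))$ is nondecreasing in $h$ by concavity, so it \emph{decreases} to $D^-_yJ(t,y)$ as $h\downarrow 0$ and the interchange of limit and expectation is justified by the Lipschitz bound $\sup_r Z(r)\in L^1$ from Proposition~\ref{prop:goodversion}; and the replacement $E[J(t,\cdot)]=E[\bar J(t,\cdot)]$ is licensed by part a) of that proposition. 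With those in place the argument is sound.
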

\begin{proof}
Notice first, that $\bar J(t,y)=E[L\int_t^T X(s)ds|\mathcal{F}_t]$ for $y\leq 1-LT$. Hence, $D^\pm_y J(t,y)=0$ for $y<1-LT$. We can hence restrict ourselves to $y\in [1-LT,1]$ for the rest of the proof.
 \\
(i) $D^-_y J(t,y_0)$ is a submartingale by Corollary
\ref{cor:submartingale} for every $y_0$. Hence, it is sufficient to
prove that for every $y_0$, the mapping $t\rightarrow E[D^-_y
J(t,y_0)]$ is rightcontinuous. Fix  $t\in[0,T)$ and a sequence
$\Delta_n\downarrow 0$. By the submartingale property, $E[D^-_y
J(t+\Delta_n,y)]\geq E[D^-_y J(t,y)]$ is nonincreasing. By the
concavity of $J(t,y)$ in $y$, we hence obtain
\begin{eqnarray*}
 && \int_{1-LT}^1 |E[D^-_y J(t+\Delta_n,y)]-  E[D^-_y J(t,y)]|dy \\ &=& E\left[\int_{1-LT}^1 (D^-_y J(t+\Delta_n,y)-  D^-_y J(t,y))dy \right] \\ &=&E[J(t,1-LT)-J(t+\Delta_n,1-LT)] \\&\rightarrow& 0
\end{eqnarray*}
for $n \rightarrow \infty$ by the rightcontinuity of $J(t,1-LT)$ in $t$. Thus, for almost every $y$,
\begin{equation}\label{eq:RC}
|E[D^-_y J(t+\Delta_n,y)]-  E[D^-_y J(t,y)]|\rightarrow 0,\quad n\rightarrow \infty.
\end{equation}
Now fix some arbitrary $y_0$ and choose a sequence $y_k\uparrow y_0$ such that (\ref{eq:RC}) holds for every $y_k$. Note that by concavity,
$$
E[D^-_y J(t+\Delta_n,y_0)]\leq E[D^-_y J(t+\Delta_n,y_k)].
$$
Consequently,
\begin{eqnarray*}
 0 &\leq&  E[D^-_y J(t+\Delta_n,y_0)]-  E[D^-_y J(t,y_0)]\\&\leq& E[D^-_y J(t+\Delta_n,y_k)]-  E[D^-_y J(t,y_k)] +E[D^-_y J(t,y_k)]-  E[D^-_y J(t,y_0)].
\end{eqnarray*}
 By (\ref{eq:RC}) we thus obtain
$$
\limsup_{n\rightarrow \infty} E[D^-_y J(t+\Delta_n,y_0)]-E[D^-_y J(t,y_0)] \leq E[D^-_y J(t,y_k)]-  E[D^-_y J(t,y_0)].
$$
Letting $k$ tend to infinity we observe that
$$
\lim_{n\rightarrow \infty} E[D^-_y J(t+\Delta_n,y_0)]=E[D^-_y J(t,y_0)].
$$
\\
(ii) We define the measurable set
$$
\mathcal{N}:=\{((t,\omega,y)\in[0,T]\times\Omega\times[1-LT,1];\; D^-_yJ(t,y,\omega)\neq D^-_yJ(t,y,\omega)\}
$$
and consider the sections
\begin{eqnarray*}
 \mathcal{N}_y&=&\{(t,\omega)\in[0,T]\times\Omega;\; D^-_yJ(t,y,\omega)\neq D^-_yJ(t,y,\omega)\},\quad y\in[1-LT,1] \\
 \mathcal{N}_{(t,\omega)}&=&\{y\in [1-LT,1];\; D^-_yJ(t,y,\omega)\neq D^-_yJ(t,y,\omega)\},\quad (t,\omega) \in [0,T]\times \Omega.
\end{eqnarray*}
It is sufficient to show that
$$
\int_{1-LT}^1  (\lambda_{[0,T]} \otimes P)(\mathcal{N}_y) dy =0.
$$
By Fubini's theorem
\begin{eqnarray*}
 \int_{1-LT}^1  (\lambda_{[0,T]} \otimes P)(\mathcal{N}_y) dy= \int_{[0,T]\times \Omega} \lambda_{[1-LT,1]} (\mathcal{N}_{(t,\omega)}) d(\lambda_{[0,T]}\otimes P).
\end{eqnarray*}
However, $ \lambda_{[1-LT,1]} (\mathcal{N}_{(t,\omega)})=0$ for every $(t,\omega)\in [0,T]\times \bar\Omega$, (where $\bar \Omega$ is the set of full measure constructed in Proposition \ref{prop:goodversion}),
by concavity of the function $y\mapsto J(t,\omega,y)$.
\end{proof}

\section{Existence for the BSPDE}

In this section we prove that the good version of the value process $J(t,y)$ indeed solves the  BSPDE (\ref{BSPDE}).

\begin{thm}
 For every $y\in (-\infty,1)$ and $t\in [0,T]$
\begin{eqnarray*}
 J(t,y)&=&E\left[\left. L \int_t^T (X(s)+D^-_yJ(s,y))_+ds\right|\mathcal{F}_t\right],  \\
J(t,1)&=&0
\end{eqnarray*}
holds $P$-almost surely. Moreover, the left-hand side derivative $D^-_y$ can be replaced by the right-hand side derivative $D^+_y$.
\end{thm}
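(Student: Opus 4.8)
The plan is to establish the identity by proving the inequalities ``$\le$'' and ``$\ge$'' between $J(t,y)$ and $E[L\int_t^T(X(s)+D^-_yJ(s,y))_+\,ds\,|\,\mathcal{F}_t]$ separately, in each case by iterating the dynamic programming principle (Proposition~\ref{prop:dp}) along a deterministic partition $t=s_0<s_1<\dots<s_m=T$ of $[t,T]$ and then letting the mesh tend to zero. First I would dispose of the easy cases: $J(t,1)=0$ because $U(\tau,1)$ contains only the control $u\equiv0$; and for $y<1-LT$ the global constraint is never active, so $J(t,y)=E[L\int_t^TX(s)\,ds\,|\,\mathcal{F}_t]$ while $D^\pm_yJ(\cdot,y)\equiv0$, whence the asserted identity (and its $D^+$-version) reduces to $X\ge0$. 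Thus fix $y\in[1-LT,1)$.

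For the upper bound, on each block $[s_i,s_{i+1}]$ Proposition~\ref{prop:dp} together with Proposition~\ref{prop:goodversion}(a) gives $J(s_i,y)=\esssup_{u\in U(s_i,y)}E[\int_{s_i}^{s_{i+1}}u(r)X(r)\,dr+J(s_{i+1},y+\int_{s_i}^{s_{i+1}}u(r)\,dr)\,|\,\mathcal{F}_{s_i}]$. Concavity of $J(s_{i+1},\cdot)$ yields $J(s_{i+1},y+\varepsilon)\le J(s_{i+1},y)+\varepsilon D^-_yJ(s_{i+1},y)$ for all $\varepsilon\ge0$ with $y+\varepsilon\le1$, and the pointwise estimate $a\,b\le L\,b_+$ for $a\in[0,L]$ (applied with $a=u(r)$ and $b=X(r)+D^-_yJ(s_{i+1},y)$, so that no adaptedness is needed) removes the supremum: $J(s_i,y)\le E[L\int_{s_i}^{s_{i+1}}(X(r)+D^-_yJ(s_{i+1},y))_+\,dr+J(s_{i+1},y)\,|\,\mathcal{F}_{s_i}]$. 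Summing over $i$ (using $J(T,y)=0$ and the tower property) and sending the mesh to $0$, the right-continuity in time of $D^-_yJ(\cdot,y)$ (Proposition~\ref{prop:derivative}(i)) and dominated convergence, with integrable dominating function $L(\sup_rX(r)+Z^*)$ (recall $Z^*<\infty$ a.s.\ and $Z^*\in L^p$ by Proposition~\ref{prop:goodversion}), give $J(t,y)\le E[L\int_t^T(X(s)+D^-_yJ(s,y))_+\,ds\,|\,\mathcal{F}_t]$ for every $y\in[1-LT,1)$.

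For the lower bound I would feed into the dynamic programming inequality, on each block $[s_i,s_{i+1}]$, the control $u(r)=L\mathbf{1}_{\{X(r)+D^-_yJ(r,y)>0\}}$ for $r\in[s_i,s_{i+1}]$ and $u\equiv0$ afterwards (using the RCLL modification of $D^-_yJ(\cdot,y)$ for progressive measurability; this $u$ lies in $U(s_i,y)$ once the mesh is at most $(1-y)/L$, which is where $y<1$ enters). Concavity now gives the reverse estimate $J(s_{i+1},y+\varepsilon)\ge J(s_{i+1},y)+\varepsilon D^-_yJ(s_{i+1},y+L(s_{i+1}-s_i))$, since $0\le\varepsilon\le L(s_{i+1}-s_i)$ and $D^-_yJ(s_{i+1},\cdot)$ is nonincreasing. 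Telescoping and letting the mesh shrink, the derivative term is evaluated at the running right endpoint and at a level just above $y$; sandwiching $D^+_yJ(r,y)\le\liminf\le\limsup\le D^-_yJ(r,y)$ by RCLL-in-$t$ and monotonicity-in-$y$, and then invoking Proposition~\ref{prop:derivative}(ii)---which gives $D^-_yJ(\cdot,y)=D^+_yJ(\cdot,y)$ $\lambda_{[0,T]}\otimes P$-a.e.\ for a.e.\ $y$---collapses the sandwich, so that dominated convergence yields $J(t,y)\ge E[L\int_t^T(X(s)+D^-_yJ(s,y))_+\,ds\,|\,\mathcal{F}_t]$ for a.e.\ $y\in[1-LT,1)$. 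Combined with the upper bound, equality holds for a.e.\ $y$; it then extends to all $y\in[1-LT,1)$ by approximating $y$ from the left by such ``good'' $y$'s and using continuity of $J(t,\cdot)$ together with left-continuity of $y\mapsto D^-_yJ(s,y)$ (both from concavity) and dominated convergence.

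The step I expect to be the main obstacle is the lower bound, and specifically the interplay between the mesh-limit bookkeeping (the derivative shows up at $(s_{i+1},y+L(s_{i+1}-s_i))$ rather than at $(r,y)$) and the fact that $D^-_yJ=D^+_yJ$ is only available for a.e.\ $y$, so that the passage from a.e.\ $y$ to every $y$ must be done carefully through one-sided continuity in $y$. Finally, the assertion that $D^-_y$ may be replaced by $D^+_y$ follows from the same a.e.-in-$y$ identity $D^-_yJ=D^+_yJ$, together with right-continuity in $y$ of $y\mapsto D^+_yJ(s,y)$ and continuity of $J(t,\cdot)$, which upgrade the resulting equation from a.e.\ $y$ to every $y\in(-\infty,1)$.
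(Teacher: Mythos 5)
Your proposal is correct, and the upper bound (your Step 1) and the final extension from a.e.\ $y$ to all $y$ are essentially the same as the paper's argument, but your lower bound takes a genuinely different route that is worth comparing. The paper constructs the suboptimal controls from a time-mollification $Z_m(r)=m\int_{(r-1/m)\wedge 0}^r\bigl(X(s)+D^-_yJ(s,y)\bigr)ds$, evaluated at the \emph{left} endpoint of each block, i.e.\ $u^{t^n_i,y}_m(r)=L\,\mathbf{1}_{(t^n_i,t^n_{i+1}]}(r)\,\mathbf{1}_{\{Z_m(t^n_i)>0\}}$. This has two functions there: the block volume $Y^{n,m}_i$ becomes $\mathcal F_{t^n_i}$-measurable, which is precisely what enables the paper to invoke the submartingale property of the difference process (Proposition~\ref{prop:submartingale}) to pull the time argument of the difference quotient back from $t^n_{i+1}$ to $r$ before sending the mesh to zero; and the continuity of $Z_m$ in $r$ makes $\mathbf 1_{\{Z_m(\underline\pi_n(r))>0\}}Z_m(r)\to(Z_m(r))_+$ a clean $n\to\infty$ limit, after which Lebesgue differentiation handles $m\to\infty$. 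You instead plug in the ``exact'' control $L\,\mathbf 1_{\{X(r)+\widehat{D^-_yJ}(r,y)>0\}}$ evaluated at the running time $r$; the block volume is then only $\mathcal F_{s_{i+1}}$-measurable, so the submartingale step is unavailable, but you sidestep it entirely by applying concavity pathwise in the form $J(s_{i+1},y+\varepsilon)\ge J(s_{i+1},y)+\varepsilon\,D^-_yJ\bigl(s_{i+1},y+L(s_{i+1}-s_i)\bigr)$, valid for all $0\le\varepsilon\le L(s_{i+1}-s_i)$ regardless of the measurability of $\varepsilon$. The price is the genuinely joint limit in the derivative's time \emph{and} $y$ argument, which you handle correctly via the sandwich $D^+_yJ(r,y)\le\liminf\le\limsup\le D^-_yJ(r,y)$ (right-continuity in $t$, monotonicity in $y$), collapsed by Proposition~\ref{prop:derivative}(ii) for a.e.\ $y$, followed by Fatou. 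Both routes hinge on Proposition~\ref{prop:derivative}, but your version replaces the mollification-plus-submartingale machinery with a more elementary pathwise concavity and sandwiching argument, while the paper's two-parameter limit ($n\to\infty$ then $m\to\infty$) trades that elementariness for a cleaner separation of the limiting steps. One small point to keep honest in a full write-up: the sandwich uses the RCLL modification $\widehat{D^-_yJ}(\cdot,y')$ for the (countably many) levels $y'=y+L(s^n_{i+1}-s^n_i)$ arising across all $n$, so the null set accounting should be done over that countable family, and the eventual identification of the liminf with $D^+_yJ(r,y)$ should be via a countable sequence $\delta_k\downarrow 0$ plus Fubini.
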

\begin{proof}
 The boundary condition $J(t,1)=0$ is obviously satisfied. \\
{\it Step 1:} We show for every $y\in(-\infty,1)$
$$
J(t,y)\leq E\left[\left. L \int_t^T (X(s)+D^-_yJ(s,y))_+ds\right|\mathcal{F}_t\right].
$$
To this end we first fix some $(t,y)\in[0,T]\times (-\infty,1)$ and
choose a sequence of partitions $\pi_n=\{t^n_0,t^n_1,\ldots,t^n_n\}$
of $[t,T]$ such that the mesh size $|\pi_n|=\max_{i=1,\ldots, n}
|t^n_i-t^n_{i-1}|$ tends to zero and with $t^n_0=t$ and $t_n^n=T$.
We denote by $\bar u^{t_i^n,y}$ an optimal control for $\bar
J(t_i^n,y)$ and define $\bar Y_i^{n}:=\int_{t_i^n}^{t_{i+1}^n} \bar
u^{t_i^n,y}(r)dr$. Applying the dynamic programming principle
(Proposition \ref{prop:dp}) repeatedly, we obtain
\begin{eqnarray*}
 && J(t,y)\\&=&E[\int_{t_0^n}^{t_1^n} \bar u^{t_0^n,y}(r)X(r)dr + J(t_1,y) + (J(t_1,y+\bar Y_0^n)-J(t_1,y))|\mathcal{F}_t] \\
&=& \sum_{i=0}^{n-1} E[\int_{t_i^n}^{t_{i+1}^n} \bar u^{t_i^n,y}(r)X(r)dr |\mathcal{F}_t]  \\ && +  \sum_{i=0}^{n-1} E[ J(t_{i+1},y+\bar Y_i^n)-J(t_{i+1},y)|\mathcal{F}_t] \\
&=& \sum_{i=0}^{n-1} E[\int_{t_i^n}^{t_{i+1}^n} \bar u^{t_i^n,y}(r)(X(r)+D^-_yJ(r,y)) dr |\mathcal{F}_t]
\\ && +  \sum_{i=0}^{n-1} E[ \int_{t_i^n}^{t_{i+1}^n}   \bar u^{t_i^n,y}(r) \left(\frac{J(t_{i+1},y+\bar Y_i^n)-J(t_{i+1},y)}{\bar Y_i^n}-D^-_yJ(r,y)\right)dr|\mathcal{F}_t] \\
&=& (I)+(II)
\end{eqnarray*}
 Then,
$$
(I)\leq E\left[\left. L \int_t^T (X(s)+D^-_yJ(s,y))_+ds\right|\mathcal{F}_t\right],
$$
and it remains to show that the limsup of  $(II)$ is nonpositive. We denote by $\widehat{D^-_yJ}(r,y)$ the RCLL modification of $D^-_yJ(r,y)$ which exists by Proposition \ref{prop:derivative}, (i).
Moreover, let $\bar \pi_n(r)=t_{i+1}^n$ for $r\in(t^n_i,t^n_{i+1}]$. By concavity
we get
\begin{eqnarray*}
 (II)&\leq& \sum_{i=0}^{n-1} E[ \int_{t_i^n}^{t_{i+1}^n}  \bar u^{t_i^n,y}(r) \left(D^-_yJ(t^n_{i+1},y)-D^-_yJ(r,y)\right)dr|\mathcal{F}_t] \\
&\leq& L E[\int_t^T |\widehat{D^-_yJ}(\bar \pi_n(r),y)-\widehat{D^-_yJ}(r,y)|dr |\mathcal{F}_t].
\end{eqnarray*}
The right-hand side converges to zero by rightcontinuity and dominated convergence. \\
{\it Step 2:} We show
$$
J(t,y)\geq E\left[\left. L \int_t^T (X(s)+D^-_yJ(s,y))_+ds\right|\mathcal{F}_t\right].
$$
for every
$$
y\in A:=\{\eta \in (-\infty,1);\ \lambda_{[0,T]}\otimes P(\{D^-_y J(\cdot,\eta)\neq D^+_y J(\cdot,\eta)\})=0\}.
$$
We fix a pair $(t,y)\in[0,T]\times A$ and choose a sequence of partitions $\pi_n=\{t^n_0,t^n_1,\ldots,t^n_n\}$ of $[t,T]$ such that the mesh size $|\pi_n|=\max_{i=1,\ldots, n} |t^n_i-t^n_{i-1}|$ tends to zero
and with $t^n_0=t$ and $t_n^n=T$. Now we define the controls,
$$
u_m^{t_i^n,y}(r)=L {\bf 1}_{(t_i^n, t_{i+1}^n]}(r) {\bf 1}_{\{Z_m(t_i^n)>0\}},\quad m\in \mathbb{N},
$$
where
$$
Z_m(r)=m \int_{(r-1/m)\wedge 0}^r (X(s)+D^-_yJ(s,y)) ds,\quad  r\in [0,T].
$$
By Lebesgue's differentiation theorem and Fubini's theorem
\begin{equation}\label{hilf15}
 \lambda_{[t,T]}\otimes P(\{(r,\omega);\; \lim_{m\rightarrow \infty} Z_m(r)= (X(r)+D^-_yJ(r,y)) \}^c)=0
\end{equation}
Note that  $u_m^{t_i^n,y} \in {U}(t_i^n,y)$ for sufficiently large $n$ (independent of $m$), which we assume from now on. We define
$$
Y_i^{n,m}:=\int_{t_i^n}^{t_{i+1}^n} u_m^{t_i^n,y}(r)dr,
$$
which is $\mathcal{F}_{t_i^n}$-measurable. Similarly to the first step, but taking the suboptimality of the controls into account, we get
\begin{eqnarray*}
  && J(t,y)\\&\geq& \sum_{i=0}^{n-1} E[\int_{t_i^n}^{t_{i+1}^n}  u_m^{t_i^n,y}(r)(X(r)+D^-_yJ(r,y)) dr |\mathcal{F}_t]
\\ && +  \sum_{i=0}^{n-1} E[ \int_{t_i^n}^{t_{i+1}^n}   u_m^{t_i^n,y}(r) \left(\frac{J(t_{i+1},y+ Y_i^{n,m})-J(t_{i+1},y)}{ Y_i^{n,m}}-D^-_yJ(r,y)\right)dr|\mathcal{F}_t] \\
&=& (I)+(II)
\end{eqnarray*}
We first treat the term $(I)$.  Let $\underline \pi_n(r)=t_{i}^n$ for $r\in(t^n_i,t^n_{i+1}]$. Then,
\begin{eqnarray*}
 (I)&=& E[\int_{t}^{T} L {\bf 1}_{\{Z_m(\underline{\pi}_n(r))>0\}} (X(r)+{D^-_yJ}(r,y)) dr |\mathcal{F}_t]
\\ &\geq& E[\int_{t}^{T} L {\bf 1}_{\{Z_m(\underline{\pi}_n(r))>0\}} Z_m(r) dr |\mathcal{F}_t]  \\ &&- L E[\int_{t}^{T}
|X(r)+{D^-_yJ}(r,y)-Z_m(r)|dr\;|\mathcal{F}_t]
\end{eqnarray*}
Concerning  term $(II)$ we note that, for $r\in (t^n_i,t^n_{i+1}]$,
\begin{eqnarray*}
 && E[\int_{t^n_i}^{t^n_{i+1}}  u_m^{t_i^n,y}(r)\left( \frac{J(t_{i+1},y+ Y_i^{n,m})-J(t_{i+1},y)}{ Y_i^{n,m}}-D^-_yJ(r,y)\right)dr|\mathcal{F}_t] \\
&=&  E[\int_{t^n_i}^{t^n_{i+1}}    u_m^{t_i^n,y}(r)\left(\frac{ E[J(t_{i+1},y+ Y_i^{n,m})-J(t_{i+1},y)|\mathcal{F}_r]}{ Y_i^{n,m}}-D^-_yJ(r,y)dr\right)|\mathcal{F}_t] \\
&\geq&  E[ \int_{t^n_i}^{t^n_{i+1}}   u_m^{t_i^n,y}(r)\left(\frac{ J(r,y+ Y_i^{n,m})-J(r,y)}{ Y_i^{n,m}}-D^-_yJ(r,y)\right)dr|\mathcal{F}_t]
\\ &=&  E[ \int_{t^n_i}^{t^n_{i+1}}   u_m^{t_i^n,y}(r)\left(\frac{ J(r,y+ L(t_{i+1}^n-t_i^n)-J(r,y)}{ L(t^n_{i+1} - t^n_i) }-D^-_yJ(r,y)\right)dr|\mathcal{F}_t] \\
&\geq& - L E[ \int_{t^n_i}^{t^n_{i+1}} |\frac{ J(r,y+ L(t_{i+1}^n-t_i^n)-J(r,y)}{ L(t^n_{i+1} - t^n_i) }-D^-_yJ(r,y)|\;dr\;|\mathcal{F}_t]
\end{eqnarray*}
Here, we applied the $\mathcal{F}_{t_i^n}$-measurability of $Y^{n,m}_i$ and the submartingale property in Proposition \ref{prop:submartingale}. Hence, making use of $y\in A$,
\begin{eqnarray*}
 && (II) \\
&\geq & -L E[ \int_t^T    |\frac{ J(r,y+ L(\bar \pi_n(r)-\underline \pi_n(r)))-J(r,y)}{ L(\bar \pi_n(r)-\underline \pi_n(r))}-D^+_yJ(r,y)|dr|\mathcal{F}_t].
\end{eqnarray*}
Gathering the estimates for $(I)$ and $(II)$ we have
\begin{eqnarray*}
 J(t,y)&\geq&   E[\int_{t}^{T} L {\bf 1}_{\{Z_m(\underline{\pi}_n(r))>0\}} Z_m(r) dr |\mathcal{F}_t]  \\ &&- L E[\int_{t}^{T}
|X(r)+{D^-_yJ}(r,y)-Z_m(r)|dr\;|\mathcal{F}_t]  \\ &&-L E[ \int_t^T    |\frac{ J(r,y+ L(\bar \pi_n(r)-\underline \pi_n(r)))-J(r,y)}{ L(\bar \pi_n(r)-\underline \pi_n(r))}-D^+_yJ(r,y)|dr|\mathcal{F}_t].
\end{eqnarray*}
As $Z_m$ has continuous paths, we
get $$L {\bf 1}_{\{Z_m(\underline{\pi}_n(r))>0\}} Z_m(r) dr\rightarrow L(Z_m(r))_+$$ as $n$ tends to infinity. Letting $n$ go to infinity, we thus obtain by dominated convergence
$$
J(t,y)\geq E[\int_{t}^{T} L (Z_m(r))_+ dr |\mathcal{F}_t] - L E[\int_{t}^{T}
|X(r)+{D^-_yJ}(r,y)-Z_m(r)|dr\;|\mathcal{F}_t].
$$
In view of (\ref{hilf15}) the proof of Step 2 can then be completed by letting $m$ tend to infinity.
\\
{\it Step 3:} We can now prove the assertion.

By step 1 and 2 we have
\begin{eqnarray}\label{eq:hilf9}
 J(t,y)&=&E\left[\left. L \int_t^T (X(s)+D^-_yJ(s,y))_+ds\right|\mathcal{F}_t\right]\nonumber \\
 &=&E\left[\left. L \int_t^T (X(s)+D^+_yJ(s,y))_+ds\right|\mathcal{F}_t\right]
\end{eqnarray}
for $y \in A$. Now fix some $y\in (-\infty,-1)\setminus A$. By Proposition \ref{prop:derivative}, (ii), there are sequences $(\bar y_k)$ and $(\underline y_k)$ in $A$ such that $\bar y_k \downarrow y$
and $\underline y_k \uparrow y$. Recalling that $y\mapsto J(t,y)$ is continuous, $y\mapsto D^-_yJ(s,y)$ is leftcontinuous and $y\mapsto D^+_yJ(s,y)$ is rightcontinuous, we immediately see
that the equations in (\ref{eq:hilf9}) also hold for $y$.
\end{proof}

We can slightly reformulate the result that the value process solves the above BSPDE in the following way.
\begin{cor}\label{cor:existence}
  For every $y\in (-\infty,1]$
\begin{eqnarray*}
 J(t,y)&=&E\left[\left. L \int_t^T (X(s)+D^-_yJ(s,y))_+ds\right|\mathcal{F}_t\right],  \quad t\in [0,T]\\
D^-_yJ(t,1)&\leq&-X(t), \quad D^-_y J(t,1-L(T-t))=0,  \quad t\in [0,T)
\end{eqnarray*}
holds $P$-almost surely.
\end{cor}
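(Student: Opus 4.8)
The plan is to deduce all three assertions from the Theorem just established, combined with the concavity, continuity and path regularity of $J$ recorded in Propositions \ref{prop:goodversion} and \ref{prop:derivative}. For $y\in(-\infty,1)$ the first display is precisely the Theorem, so only the value $y=1$ and the two boundary relations require work.

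First I would obtain the equation at $y=1$ and the inequality $D^-_y J(t,1)\le -X(t)$ together, by letting $y\uparrow 1$ in the Theorem along a sequence $y_n\uparrow 1$ with $y_n<1$. On the left, $J(t,y_n)\to J(t,1)=0$ by continuity of $y\mapsto J(t,y)$. On the right, concavity makes $y\mapsto D^-_y J(s,y)$ nonincreasing, and it is left-continuous, so $(X(s)+D^-_y J(s,y_n))_+$ decreases pointwise to $(X(s)+D^-_y J(s,1))_+$; the first member $L\int_t^T(X(s)+D^-_y J(s,y_1))_+\,ds$ has finite expectation, since its $\mathcal{F}_t$-conditional expectation is $J(t,y_1)$ by the Theorem, so the decreasing version of the conditional monotone convergence theorem yields
\[
0=J(t,1)=E\!\left[\left.L\int_t^T (X(s)+D^-_y J(s,1))_+\,ds\right|\mathcal{F}_t\right],
\]
i.e.\ the first display holds at $y=1$ as well. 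Taking $t=0$ and then expectations forces the nonnegative integrand to vanish: $X(s)+D^-_y J(s,1)\le 0$ for $\lambda_{[0,T]}\otimes P$-a.e.\ $(s,\omega)$. Finally, since $s\mapsto X(s)$ and the RCLL modification of $s\mapsto D^-_y J(s,1)$ from Proposition \ref{prop:derivative}(i) are both right-continuous, an a.e.\ inequality between them is in fact an everywhere inequality, giving $D^-_y J(t,1)\le -X(t)$ for all $t\in[0,T)$, $P$-a.s.

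For the boundary condition $D^-_y J(t,1-L(T-t))=0$, I would show that below the threshold the value does not depend on $y$. For deterministic $t\in[0,T)$ and $y\le 1-L(T-t)$ the constant control $u\equiv L$ belongs to $U(t,y)$, while every admissible control obeys $u\le L$; hence $\bar J(t,y)=E[L\int_t^T X(s)\,ds\,|\,\mathcal{F}_t]=:V(t)$, which does not depend on $y$ on $(-\infty,1-L(T-t)]$. The process $V$ is a martingale minus an absolutely continuous process, so it admits an RCLL modification, still written $V$. For each fixed rational $y$, the processes $t\mapsto J(t,y)$ and $t\mapsto V(t)$ are RCLL (Proposition \ref{prop:goodversion}) and, by part a) of that proposition and the identity just derived, are modifications of one another on the time interval $\{t\in[0,T]:\, y\le 1-L(T-t)\}$, hence indistinguishable there. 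Intersecting over the countably many rational $y$ gives a set of full measure on which $J(t,y)=V(t)$ for all rational $y\le 1-L(T-t)$ and all such $t$; by continuity of $y\mapsto J(t,y)$ this extends to every real $y\le 1-L(T-t)$, so $y\mapsto J(t,y)$ is constant on $(-\infty,1-L(T-t)]$ and its left-hand derivative at the right endpoint vanishes.

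The whole argument is an assembly of results already available, and no single step is hard. The two places that need some care are the interchange of the limit $y\uparrow 1$ with the conditional expectation and the time integral, which the conditional monotone convergence theorem handles once the Theorem is used to supply integrability of the dominating term, and the upgrading of the $\lambda_{[0,T]}\otimes P$-a.e.\ (resp.\ rational-$y$) statements to statements valid for every $t\in[0,T)$, which rests on the right-continuity in $t$ of all processes involved together with the continuity in $y$ of $J$.
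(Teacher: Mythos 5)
Your proof is correct, and for the boundary condition $D^-_y J(t,1-L(T-t))=0$ it follows essentially the same line as the paper: one observes that $J(t,y)$ does not depend on $y$ for $y\le 1-L(T-t)$, since $u\equiv L$ is both admissible and a pointwise upper bound for every admissible control; your extra care in passing from the a.s.-for-each-$(t,y)$ identity to a pathwise one via rational $y$, RCLL paths, and continuity in $y$ is a reasonable tightening of what the paper dismisses as ``trivial''. For the inequality $D^-_y J(t,1)\le -X(t)$ and the validity of the BSPDE at $y=1$, however, your route is genuinely different and reverses the paper's logical order. The paper proves the inequality \emph{first} and directly, without invoking the Theorem: it lower-bounds $J(t,y)$ by the value of the suboptimal control $u^{t,y}=L\mathbf{1}_{[t,\,t+(1-y)/L]}$, writes $\frac{J(t,y)-J(t,1)}{y-1}\le -\frac{L}{1-y}E\bigl[\int_t^{\min\{t+(1-y)/L,T\}}X(s)\,ds\,\big|\,\mathcal{F}_t\bigr]$, and lets $y\uparrow 1$ using right-continuity of $X$; the equation at $y=1$ is then an immediate corollary since the inequality forces $(X+D^-_yJ(\cdot,1))_+\equiv 0$. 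You instead \emph{first} pass to the limit $y\uparrow 1$ inside the Theorem's identity by conditional monotone convergence (relying on concavity making $D^-_yJ(s,\cdot)$ nonincreasing and left-continuous, and on integrability of the first term), so that $0=J(t,1)=E\bigl[L\int_t^T(X(s)+D^-_yJ(s,1))_+\,ds\,\big|\,\mathcal{F}_t\bigr]$, and \emph{then} deduce the inequality from the fact that a nonnegative integrand with zero expectation vanishes a.e., upgrading to every $t$ via right-continuity of $X$ and of the RCLL modification of $D^-_yJ(\cdot,1)$. Both arguments are sound and of comparable length; the paper's is self-contained and more elementary (a one-control estimate), while yours leans entirely on the Theorem and the regularity already established, which is arguably a cleaner use of the existing machinery. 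One small point of care in your version: the everywhere-in-$t$ inequality is really for the RCLL modification $\widehat{D^-_yJ}(t,1)$, and returns to $D^-_yJ(t,1)$ only a.s.\ for each fixed $t$, which is what the Corollary asserts.
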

\begin{proof}
 In view of the previous theorem, we only need to show that
\begin{equation}\label{eq:hilf10}
D^-_yJ(t,1)\leq-X(t), \quad D^-_y J(t,1-L(T-t))=0,
\end{equation}
for every $t\in[0,T)$. The first assertion in (\ref{eq:hilf10}) in turn implies
$$
E\left[\left. L \int_t^T (X(s)+D^-_yJ(s,1))_+ds\right|\mathcal{F}_t\right]=0=J(t,1)
$$
for $t\in [0,T)$.

Note that the second assertion in (\ref{eq:hilf10}) is trivial, because
$$
J(t,y)=E[\int_t^T LX(s)ds|\mathcal{F}_t]
$$
for $y<1-L(T-t)$. In order to prove the first assertion we
define $u^{t,y}(r)=L{\bf 1}_{[t,t+(1-y)/L]}(r)$. Then, for $y<1$
\begin{eqnarray*}
 \frac{J(t,y)-J(t,1)}{y-1}=-\frac{J(t,y)}{1-y} \leq - \frac{L}{1-y} \int_t^{\min\{t+(1-y)/L,T\}} X(s)ds.
\end{eqnarray*}
By right-continuity of $X$, the right-hand side converges to $-X(t)$, which concludes the proof of (\ref{eq:hilf10}).
\end{proof}

\section{Characterization of optimal controls}

In this section we characterize optimality of controls. By Corollary \ref{cor:existence} one expects
that the following result holds under at most technical conditions: Suppose that $u\in U(t,y)$. Then $u$ is an optimal control, if and only if
\begin{equation}\label{eq:optimality}
 u(s) \in \left\{\begin{array}{cl} \{0\}, & X(s)+D_y^-J(s,y+\int_t^s  u(r)dr)<0\\ \{L\}, & X(s)+D_y^-J(s,y+\int_t^s u(r)dr)>0 \\
 \ [0,L], &
X(s)+D_y^-J(s,y+\int_t^s  u(r)dr)=0 \end{array}\right.
\end{equation}
$\lambda_{[t,T]}\otimes P$-almost surely.

To prove such result we require an appropriate version of a chain rule, which is derived in the following lemma.
\begin{lem}\label{lem:chainrule}
 Suppose
$$
V(t,y)=E[\int_{t}^T v(r,y) dr |\mathcal{F}_t],\quad t\in [0,T], y\in (-\infty,1],
$$
is an adapted random field which satisfies:
\begin{enumerate}
 \item There is a set $\bar \Omega$ of full $P$ measure such that $D^-_y V(t,\omega,y)$ exists  for every $t\in [0,T],\; y\in (-\infty,1]$ and $\omega \in \bar \Omega$, and such that
  $v(t,\omega,y)$ is leftcontinuous in $y$ for every $t\in [0,T],\; y\in (-\infty,1]$ and $\omega \in \bar \Omega$.
     \item $v(t,y)$ is $(\mathcal{F}_t)$-adapted for every $y\in(-\infty,1]$ and
 $$
E[\sup_{(t,y,\tilde y)\in [0,T]\times (-\infty,1]^2,\; \tilde y\neq y} \left(|v(t,y)|+ \left| \frac{V(t,y)-V(t,\tilde y)}{y-\tilde y}\right|\right)]<\infty.
$$
\end{enumerate}
Then, for every $(t,y)\in [0,T]\times (-\infty,1]$ and for every nondecreasing process
of the form $y(r)=y+\int_{t}^r u(s)ds$ with $u\in U(t,y)$,
$$
V(t,y)=E[\int_t^T  \left( v(r,y(r)) dr -  D^-_yV(r,y(r))u(r) \right)dr |\mathcal{F}_t]
$$
holds $P$-almost surely.
\end{lem}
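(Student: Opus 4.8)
The plan is to approximate $V(t,y)$ by Riemann sums along partitions $\pi_n=\{t=t^n_0<\dots<t^n_n=T\}$ of $[t,T]$ with mesh $|\pi_n|\to0$, using the boundary value $V(T,\cdot)\equiv0$. Set $y_i:=y(t^n_i)=y+\int_t^{t^n_i}u(s)\,ds$; since $u$ is adapted, $y_i$ is $\mathcal F_{t^n_i}$-measurable, and $r\mapsto y(r)$ is continuous, nondecreasing and $(-\infty,1]$-valued because $u\in U(t,y)$. Telescoping then gives, $P$-a.s.,
\[
V(t,y)=\sum_{i=0}^{n-1}E\bigl[V(t^n_i,y_i)-V(t^n_{i+1},y_{i+1})\,\big|\,\mathcal F_t\bigr],
\]
and I would split each term as $V(t^n_i,y_i)-V(t^n_{i+1},y_{i+1})=A^n_i+B^n_i$ into a time increment at frozen volume, $A^n_i:=V(t^n_i,y_i)-V(t^n_{i+1},y_i)$, and a volume increment at frozen time, $B^n_i:=V(t^n_{i+1},y_i)-V(t^n_{i+1},y_{i+1})$.

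For the $A^n_i$ I would use that, for each fixed $z$, the process $r\mapsto V(r,z)+\int_0^r v(s,z)\,ds=E[\int_0^T v(s,z)\,ds\mid\mathcal F_r]$ is a martingale, and freeze $z$ at the $\mathcal F_{t^n_i}$-measurable value $y_i$ through the usual substitution lemma (approximating $y_i$ by simple random variables; joint measurability of $V$, $v$ and the integrable envelope from the second assumption make this legitimate), which gives $E[A^n_i\mid\mathcal F_{t^n_i}]=E[\int_{t^n_i}^{t^n_{i+1}}v(s,y_i)\,ds\mid\mathcal F_{t^n_i}]$. Hence $\sum_i E[A^n_i\mid\mathcal F_t]=E[\int_t^T v(s,y(\underline\pi_n(s)))\,ds\mid\mathcal F_t]$, where $\underline\pi_n(s)$ is the left endpoint of the partition cell of $s$; since $y(\underline\pi_n(s))\uparrow y(s)$, the left-continuity of $v(s,\cdot)$ from the first assumption and dominated convergence (with the envelope of the second assumption) yield $\sum_i E[A^n_i\mid\mathcal F_t]\to E[\int_t^T v(s,y(s))\,ds\mid\mathcal F_t]$.

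For the $B^n_i$ I would observe that the second assumption forces $y\mapsto V(t,\omega,y)$ to be Lipschitz with an $\omega$-dependent finite constant uniform in $t$, hence absolutely continuous; its a.e.\ two-sided derivative must agree with $D^-_yV$ wherever the latter exists, which by the first assumption is everywhere on $\bar\Omega$, so $B^n_i=-\int_{y_i}^{y_{i+1}}D^-_yV(t^n_{i+1},\eta)\,d\eta$. A pathwise change of variables for the absolutely continuous nondecreasing map $r\mapsto y(r)$ (with $y'=u$ a.e.) turns this into $-\int_{t^n_i}^{t^n_{i+1}}D^-_yV(t^n_{i+1},y(r))\,u(r)\,dr$, so $\sum_i E[B^n_i\mid\mathcal F_t]=-E[\int_t^T D^-_yV(\overline\pi_n(r),y(r))\,u(r)\,dr\mid\mathcal F_t]$ with $\overline\pi_n(r)$ the right endpoint of the partition cell of $r$.

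The main obstacle is the final convergence $E[\int_t^T D^-_yV(\overline\pi_n(r),y(r))u(r)\,dr\mid\mathcal F_t]\to E[\int_t^T D^-_yV(r,y(r))u(r)\,dr\mid\mathcal F_t]$, which together with the preceding step pins down the limit of the telescoping sum and proves the lemma: one has to replace the forward time $\overline\pi_n(r)\downarrow r$ in the difference quotients by the running time $r$, and the abstract hypotheses give no direct control on the $t$-regularity of $D^-_yV$. I would change variables back to the volume level and write the difference as $E[\int_{y}^{y(T)}(D^-_yV(\theta_n(\eta),\eta)-D^-_yV(\theta(\eta),\eta))\,d\eta\mid\mathcal F_t]$, where $\theta(\eta):=\inf\{s\ge t:y(s)\ge\eta\}$ is the (stopping) time at which $y(\cdot)$ attains level $\eta$ and $\theta_n(\eta)\downarrow\theta(\eta)$, then use that $D^-_yV(s,\eta)=\lim_{h\downarrow0}h^{-1}(V(s,\eta)-V(s,\eta-h))$ is, for fixed $\eta$, the pointwise limit of processes in $s$ possessing RCLL modifications (martingale plus absolutely continuous part), so that after passing to those modifications — reconciling the a.e.\ versus everywhere discrepancy between $V$ and them by Fubini and Lipschitz continuity in $\eta$ — one obtains the required convergence for $\lambda$-a.e.\ $\eta$, and conclude by dominated convergence with the envelope of the second assumption. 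It then only remains to note $\int_{y}^{y(T)}D^-_yV(\theta(\eta),\eta)\,d\eta=\int_t^T D^-_yV(r,y(r))\,u(r)\,dr$, which holds because $u=0$ $\lambda$-a.e.\ on the flat parts of $y(\cdot)$, so that $\theta(y(r))=r$ on the set that matters.
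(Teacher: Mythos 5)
Your telescoping of $V$ into time increments $A^n_i$ and volume increments $B^n_i$ is natural, and the treatment of $\sum_i A^n_i$ via the martingale structure of $s\mapsto V(s,z)+\int_0^s v(r,z)\,dr$ and the left-continuity of $v$ is sound. The gap, which you have correctly located yourself, is in the final convergence $E[\int_t^T D^-_yV(\overline\pi_n(r),y(r))u(r)\,dr\mid\mathcal{F}_t]\to E[\int_t^T D^-_yV(r,y(r))u(r)\,dr\mid\mathcal{F}_t]$, and your proposed remedy does not close it. The lemma's hypotheses give no $t$-regularity for $D^-_yV$ whatsoever: no concavity, no sub- or supermartingale property, no modification result. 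That each difference quotient $h^{-1}\bigl(V(\cdot,\eta)-V(\cdot,\eta-h)\bigr)$ is a martingale plus an absolutely continuous drift --- hence admits an RCLL version --- does not carry over to the pointwise limit $D^-_yV(\cdot,\eta)$: a pointwise limit of RCLL processes can fail to be right-continuous, and the Fubini/Lipschitz ``reconciliation'' you sketch only removes $\lambda\otimes P$-null sets, which cannot force convergence along the particular decreasing random sequences $\theta_n(\eta)\downarrow\theta(\eta)$ for the continuum of levels $\eta$ you need.

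The paper sidesteps this obstacle entirely by \emph{integrating once in $y$ before telescoping}. It sets $\tilde V(t,y):=\int_0^y V(t,\eta)\,d\eta$, observes that $\tilde V(t,y)=E[\int_t^T\tilde v(r,y)\,dr\mid\mathcal{F}_t]$ with $\tilde v(r,y)=\int_0^y v(r,\eta)\,d\eta$, and proves the chain rule for $\tilde V$ first, restricting initially to piecewise-constant controls $u$ so that $y(s^N_{i+1})$ is $\mathcal{F}_{s^N_i}$-measurable. In the $\tilde V$-problem the $y$-derivative appearing in the volume-increment term is $D_y\tilde V=V$ itself, and the analogue of your $\overline\pi_n$-versus-$r$ discrepancy is $E[V(s^N_i,\eta)-V(r,\eta)\mid\mathcal{F}_{s^N_i}]$, which is controlled directly by the integral representation of $V$: it is bounded in absolute value by $|\pi_N|\,E[\sup_{(s,\eta)}|v(s,\eta)|\mid\mathcal{F}_{s^N_i}]$. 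No $t$-regularity of $D^-_yV$ is ever invoked. The statement for $V$ is then recovered by applying the $\tilde V$-identity at $y$ and at $y-\epsilon$, forming the difference quotient, and letting $\epsilon\downarrow0$, using the left-continuity of $v$ and the Lipschitz envelope. Unless you add a standing regularity assumption on $D^-_yV$ in $t$ (thereby strengthening the hypotheses beyond what the application actually provides directly), you need this ``integrate first, differentiate last'' route.
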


\begin{proof}
 We first smoothen $V$ in $y$-direction by setting
$$
\tilde V(t,y):=\int_0^y V(t,\eta)d\eta,
$$
with the usual convention that $\int_0^y V(t,\eta) d\eta= - \int_y^0 V(t,\eta)d\eta$ for $y<0$.
Notice that
$$
\tilde V(t,y)=E[\int_t^T \tilde v(t,y) dt|\mathcal{F}_t]
$$
where
$$
\tilde v(t,y)=\int_0^y v(t,\eta)d\eta.
$$
We now fix a pair $(t,y)\in [0,T]\times (-\infty,1)$ and define, for $n\in \mathbb{N}$, $t_i^n:=t+i(T-t)/n$ and
\begin{eqnarray*}
U_n(t,y)&:=&\Bigl\{u\in U(t,y);\; u(r)=u(t_i^n), \textnormal{ for every } r\in[t_i^n,t_{i+1}^n) \Bigr\}
\end{eqnarray*}
{\it Step 1:} For $u\in U_n(t,y)$ and $y(r)=y+\int_t^r u(s)ds$
\begin{eqnarray*}
 \tilde V(t,y)=E[\int_t^T \tilde v(r,y(r))dr - V(r,y(r))u(r)dr|\mathcal{F}_t].
\end{eqnarray*}
In order to prove Step 1, we fix some $n\in \mathbb{N}$ and $u\in U_n(t,y)$. Choose a sequence of refining partitions $(\pi_N)_{N\geq n}$ of $[t,T]$ such that
$\{t^n_0,\ldots, t_n^n\}\subset \{s^N_0,\ldots, s_N^N \}=\pi_N$ for every $N\geq n$. We then define
$$
\underline\pi_N(r)=s_i^N,\quad \bar\pi_N(r)=s^N_{i+1},\quad r\in(s^N_i,s^N_{i+1}].
$$
We  split
\begin{eqnarray*}
 \tilde V(t,y)&=&E[\sum_{i=0}^{N-1} \tilde V(s^N_i, y(s^N_i))- \tilde V(s_{i+1}^N,y(s^N_{i+1}))|\mathcal{F}_t] \\
&=& E[\sum_{i=0}^{N-1} \tilde V(s^N_i, y(s^N_{i+1}))- \tilde V(s^N_{i+1},y(s^N_{i+1}))|\mathcal{F}_t] \\ && +E[\sum_{i=0}^{N-1} \tilde V(s^N_i, y(s^N_i))- \tilde V(s^N_i,y(s^N_{i+1}))|\mathcal{F}_t]
\\ &=& (I)+ ({II})
\end{eqnarray*}
Then,
\begin{eqnarray*}
  (I)&=& E[\int_{t}^{T} \tilde v(r,y(\bar \pi_N(r)))dr|\mathcal{F}_t].
\end{eqnarray*}
By continuity of $\tilde v(r,\cdot)$ and dominated convergence we obtain that
$$
(I)\rightarrow E[\int_{t}^{T} \tilde v(r,y(r))dr|\mathcal{F}_t],\quad N\rightarrow \infty.
$$
We now observe that
\begin{eqnarray*}
  ({II}) &=& E[\sum_{i=0}^{N-1} \int_{s_i^N}^{s^N_{i+1}} \frac{\tilde V(s^N_i, y(s^N_i))- \tilde V(s^N_i,y(s^N_{i+1}))}{s^N_{i+1}-s^N_{i}}dr|\mathcal{F}_t] \\&=&
E[\sum_{i=0}^{N-1} \int_{s_i^N}^{s^N_{i+1}} \frac{-1}{y(s^N_{i+1})-y(s^N_{i})} \int_{y(s^N_{i})}^{y(s^N_{i+1})} V(s^N_i,\eta)d\eta\; u(r)dr|\mathcal{F}_t].
\end{eqnarray*}
Here, we used that $u(r)=u(s_i^N)$ for $r\in [s^N_i,s_{i+1}^N)$ and  $y(s_{i+1}^N)=y(s_{i}^N)+ u(s_i^N)(s_{i+1}^N-s_i^N)$. Then,   $y(s_{i+1}^N)$ is $\mathcal{F}_{s_i^N}$-measurable
and, thus,
\begin{eqnarray*}
({II}) &=& E[\sum_{i=0}^{N-1} \int_{s_i^N}^{s^N_{i+1}} \frac{-1}{y(s^N_{i+1})-y(s^N_{i})} \int_{y(s^N_{i})}^{y(s^N_{i+1})} V(r,\eta) d\eta\; u(r)dr|\mathcal{F}_t] \\
 && +  E[\sum_{i=0}^{N-1} \int_{s_i^N}^{s^N_{i+1}} \frac{-1}{y(s^N_{i+1})-y(s^N_{i})} \\ &&\quad\quad\times \int_{y(s^N_{i})}^{y(s^N_{i+1})} E[V(s^N_i,\eta)-V(r,\eta)|\mathcal{F}_{s_{i}^N}] d\eta\; u(r)dr|\mathcal{F}_t] \\
&=:& (IIa)+(IIb).
\end{eqnarray*}
Then,
$$
(IIa)=E[ \int_{t}^{T} \frac{-1}{y(\bar\pi_N(r))-y(\underline\pi_N(r))} \int_{y(\underline\pi_N(r))}^{y(\bar\pi_N(r))} V(r,\eta) d\eta\; u(r)dr|\mathcal{F}_t]
$$
By continuity of $V(r,\cdot)$ and dominated convergence we get
$$
(IIa)\rightarrow -E[\int_{t}^{T} V(r,y(r))u(r)dr|\mathcal{F}_t], \quad N\rightarrow \infty.
$$
It thus remains to show that $(IIb)$ goes to zero. To see this we note that for $r\in [s^N_i,s^N_{i+1}]$
$$
 |E[V(s^N_i,\eta)-V(r,\eta)|\mathcal{F}_{s_{i}^N}] | \leq |\pi_N| E[\sup_{(s,\eta)} |v(s,\eta)||\mathcal{F}_{s_{i}^N}],
$$
where $|\pi_N|$ denotes the mesh size of the partition $\pi_N$.
Hence,
$$
|(IIb)|\leq    |\pi_N|  E[\int_{t}^{T} u(r)dr\sup_{(s,\eta)} |v(s,\eta)| |\mathcal{F}_t]  \rightarrow 0, \quad N\rightarrow \infty.
$$

{\it Step 2:} For $u\in U(t,y)$ and $y(r)=y+\int_t^r u(s)ds$
\begin{eqnarray*}
 \tilde V(t,y)=E[\int_t^T \tilde v(r,y(r))dr - V(r,y(r))u(r)dr|\mathcal{F}_t]
\end{eqnarray*}
 Indeed, given a control $u\in U(t,y)$, we define $u_n$ via
$$
u_n(r)= n\frac{ \int_{t_{i-1}}^{t_i} u(s)ds}{T-t}, \quad r\in [t_i^n,t_{i+1}^n),\;i=1,\ldots, n-1
$$
and $u_n(r)=0$ for $r\in[0,t_1^n)$.
Then, $u_n\in U_n(t,y)$.

Let $y(r)=y+\int_{t}^r u(s)ds$ and $y_n(r)=y+\int_{t}^r u_n(s)ds$. Then it is straightforward to verify that
$$
y(t_{i-1}^n) =y_n(t^n_i),\quad i=1,\ldots, n.
$$
This implies that the sequence $(y_n(r))$ converges to $y(r)$, as $n$ tends to infinity, for every $r\in[t,T]$. By
continuity of $\tilde v(r,\cdot)$ and $V(r,\cdot)$ and dominated convergence we have
\begin{eqnarray*}
 && E[\int_t^T \tilde v(r,y_n(r))dr - V(r,y_n(r))u(r)dr|\mathcal{F}_t]\\ &\rightarrow & E[\int_t^T \tilde v(r,y(r))dr - V(r,y(r))u(r)dr|\mathcal{F}_t].
\end{eqnarray*}
Moreover, $y_n(r) \rightarrow y(r)$ for every $r\in[t,T]$, together with the boundedness of the sequence $(u_n)$ in $L^2([t,T],\lambda_{[t,T]})$ implies that $(u_n)$ converges to $u$ weakly in
$L^2([t,T],\lambda_{[t,T]})$. Hence,
\begin{eqnarray*}
 E[\int_t^T  V(r,y(r))(u_n(r)-u(r)) dr |\mathcal{F}_t]\rightarrow 0 .
\end{eqnarray*}
This shows that Step 1 implies Step 2.

{\it Step 3:}  For $u\in U(t,y)$ and $y(r)=y+\int_t^r u(s)ds$
\begin{eqnarray*}
  V(t,y)=E[\int_t^T v(r,y(r))dr - D^-_yV(r,y(r))u(r)dr|\mathcal{F}_t].
\end{eqnarray*}
Fix some $u\in U(t,y)$ and note that $u$ also belongs to $U(t,y-\epsilon)$ for $\epsilon>0$. We apply Step 2 to get
\begin{eqnarray*}
 \tilde V(t,y)&=&E[\int_t^T \tilde v(r,y(r))dr - V(r,y(r))u(r)dr|\mathcal{F}_t]\\
\tilde V(t,y-\epsilon)&=&E[\int_t^T \tilde v(r,y(r)-\epsilon)dr - V(r,y(r)-\epsilon)u(r)dr|\mathcal{F}_t]
\end{eqnarray*}
where $y(r)=y+\int_t^r u(s)ds$. Hence,
\begin{eqnarray*}
 \frac{\tilde V(t,y-\epsilon)-\tilde V(t,y)}{-\epsilon}&=& E[\int_t^T \frac{1}{\epsilon} \int_{y(r)-\epsilon}^{y(r)} v(r,\eta)d\eta dr|\mathcal{F}_t] \\ &&
  - E[\int_t^T  u(r)\frac{V(r,y(r)-\epsilon)-V(r,y(r))}{-\epsilon} dr|\mathcal{F}_t]
\end{eqnarray*}
Letting $\epsilon$ tend to zero, the right-hand side converges to
$$
 E[\int_t^T v(r,y(r))-u(r) D^-_yV(r,y(r))dr|\mathcal{F}_t]
$$
by leftcontinuity of $v$, and the left-hand side converges to $V(t,y)$, because
$$
\frac{\tilde V(t,y-\epsilon)-\tilde V(t,y)}{-\epsilon}=  \frac{1}{\epsilon} \int_{y-\epsilon}^y V(t,\eta)d\eta.
$$

\end{proof}
By the results established in the previous sections
(Corollary \ref{cor:existence} and Proposition \ref{prop:goodversion}) we, hence, arrive at the following corollary.
\begin{cor}\label{cor:chainrulesolution}
 For every $(t,y)\in [0,T]\times (-\infty,1]$ and for every nondecreasing process
of the form $y(r)=y+\int_{t}^r u(s)ds$ with $u\in U(t,y)$,
$$
J(t,y)=E[\int_t^T  L(X(r)+ D^-_yJ(r,y(r)))_+dr - \int_t^T D^-_yJ(r,y(r))u(r)dr |\mathcal{F}_t]
$$
holds $P$-almost surely.
\end{cor}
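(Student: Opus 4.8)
The plan is to deduce this corollary by a direct application of the chain rule in Lemma \ref{lem:chainrule}, with the choice $V=J$ and
$$
v(r,y):=L\bigl(X(r)+D^-_yJ(r,y)\bigr)_+.
$$
First I would invoke Corollary \ref{cor:existence} to record that, for every $y\in(-\infty,1]$, $J(t,y)=E[\int_t^T v(r,y)\,dr\,|\,\mathcal{F}_t]$ holds $P$-a.s.; this is exactly the integral representation of $V$ required in the hypothesis of Lemma \ref{lem:chainrule}, so it remains only to verify the two structural assumptions of that lemma.

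For assumption (1), existence of $D^-_yV=D^-_yJ$ for all $t$, $y$ and all $\omega$ in the full-measure set $\bar\Omega$ of Proposition \ref{prop:goodversion} follows from concavity of $y\mapsto J(t,\omega,y)$. Leftcontinuity of $y\mapsto v(t,\omega,y)$ follows because the left derivative of a concave function is leftcontinuous in $y$, $X(r)$ does not depend on $y$, and $x\mapsto(x)_+$ is continuous. For assumption (2), adaptedness of $v(t,\cdot)$ for fixed $y$ is inherited from the adaptedness of $X$ and of $D^-_yJ$ (a pointwise monotone limit of difference quotients of the adapted random field $J$). The integrability bound is where property b2) of Proposition \ref{prop:goodversion} does the work: the Lipschitz constant $Z^*=\sup_{r\in[0,T]}Z(r)$ simultaneously dominates $|D^-_yJ(t,y)|$ and every difference quotient $|(J(t,y)-J(t,\tilde y))/(y-\tilde y)|$, whence
$$
\sup_{(t,y,\tilde y),\,\tilde y\neq y}\Bigl(|v(t,y)|+\Bigl|\tfrac{J(t,y)-J(t,\tilde y)}{y-\tilde y}\Bigr|\Bigr)\le L\sup_{0\le t\le T}X(t)+(L+1)Z^*,
$$
which has finite expectation by assumption (\ref{ass:square}) and Doob's $L^p$-inequality applied to $Z$.

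With both hypotheses in hand, Lemma \ref{lem:chainrule} applied to the nondecreasing path $y(r)=y+\int_t^r u(s)\,ds$ with $u\in U(t,y)$ gives
$$
J(t,y)=E\Bigl[\int_t^T\bigl(v(r,y(r))-D^-_yJ(r,y(r))u(r)\bigr)\,dr\,\Big|\,\mathcal{F}_t\Bigr]\quad P\text{-a.s.},
$$
and substituting the definition of $v$ yields precisely the claimed identity. The only mildly delicate point in the whole argument is assumption (2): one must make sure that the relevant suprema and difference quotients are genuinely dominated, uniformly in $(t,y,\tilde y)$, by a single integrable random variable, and this is exactly what property b2) together with the a priori finiteness and $L^p$-integrability of $Z^*$ provide; everything else is routine bookkeeping.
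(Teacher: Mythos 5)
Your proposal is correct and takes exactly the route the paper intends: the paper states this corollary as an immediate consequence of Corollary \ref{cor:existence} and Proposition \ref{prop:goodversion} via Lemma \ref{lem:chainrule}, and your argument simply spells out the verification of the lemma's two hypotheses (left-continuity of $D^-_yJ$ from concavity, adaptedness, and the uniform integrable domination via the Lipschitz bound $Z^*$ from Proposition \ref{prop:goodversion}~b2) together with assumption (\ref{ass:square}) and Doob's inequality). Nothing is missing.
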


We are now in the position to characterize the set of optimal controls.
\begin{thm}\label{thm:optimalcontrol}
 A control $u\in U(t,y)$ is optimal for $J(t,y)$, if and only if (\ref{eq:optimality}) holds.
\end{thm}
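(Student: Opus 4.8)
The plan is to combine the chain-rule identity of Corollary \ref{cor:chainrulesolution} with the elementary pointwise inequality $L(a)_+ \ge u\,a$ valid for all $0\le u\le L$. First I would fix $(t,y)\in[0,T]\times(-\infty,1]$ and $u\in U(t,y)$, set $y(r)=y+\int_t^r u(s)ds$, and subtract the value $E[\int_t^T u(r)X(r)dr\mid\mathcal{F}_t]$ of the control $u$ from both sides of the identity in Corollary \ref{cor:chainrulesolution}. Collecting the terms containing $D_y^-J(r,y(r))$ yields
$$J(t,y)-E\left[\int_t^T u(r)X(r)\,dr\,\Big|\,\mathcal{F}_t\right]=E\left[\int_t^T g_u(r)\,dr\,\Big|\,\mathcal{F}_t\right],$$
where $g_u(r):=L\bigl(X(r)+D_y^-J(r,y(r))\bigr)_+-u(r)\bigl(X(r)+D_y^-J(r,y(r))\bigr)$.

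Next I would observe that $g_u(r)\ge 0$ pointwise in $(r,\omega)$: writing $a:=X(r)+D_y^-J(r,y(r))$, one has $g_u(r)=(L-u(r))\,a\ge 0$ when $a\ge 0$ and $g_u(r)=u(r)(-a)\ge 0$ when $a\le 0$, using $0\le u(r)\le L$. In particular the displayed identity reconfirms $J(t,y)\ge E[\int_t^T u(r)X(r)dr\mid\mathcal{F}_t]$. Since $J(t,y)=\bar J(t,y)$ $P$-a.s.\ by Proposition \ref{prop:goodversion}(a), the control $u$ is optimal for $J(t,y)$ if and only if $E[\int_t^T g_u(r)dr\mid\mathcal{F}_t]=0$ $P$-a.s. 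Taking (unconditional) expectations and using $g_u\ge 0$, this is in turn equivalent to $g_u(r)=0$ for $\lambda_{[t,T]}\otimes P$-a.e.\ $(r,\omega)$.

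Finally I would read off the pointwise characterization of $\{g_u(r)=0\}$. With $a$ as above, the equality $L(a)_+=u(r)\,a$ forces $u(r)=L$ on $\{a>0\}$, forces $u(r)=0$ on $\{a<0\}$, and imposes no constraint beyond $u(r)\in[0,L]$ on $\{a=0\}$. Since $a=X(r)+D_y^-J\!\left(r,\,y+\int_t^r u(s)ds\right)$, this is precisely condition (\ref{eq:optimality}), so both implications of the theorem follow at once. I do not expect a serious obstacle: the two points needing a little care are that the chain-rule identity of Corollary \ref{cor:chainrulesolution} holds for \emph{every} admissible control (not only optimal ones), and the standard passage from vanishing of the conditional expectation of a nonnegative integrand to $\lambda_{[t,T]}\otimes P$-a.e.\ vanishing of the integrand itself.
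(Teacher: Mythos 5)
Your argument is correct and is essentially the paper's own proof: both start from the identity in Corollary~\ref{cor:chainrulesolution}, decompose $J(t,y)$ into $E[\int_t^T u(r)X(r)\,dr\mid\mathcal{F}_t]$ plus the conditional expectation of the nonnegative integrand $L(X(r)+D_y^-J(r,y(r)))_+-u(r)(X(r)+D_y^-J(r,y(r)))$, and observe that optimality of $u$ is equivalent to this term vanishing, which happens $\lambda\otimes P$-a.e.\ precisely under (\ref{eq:optimality}). Your write-up merely spells out the pointwise nonnegativity and the passage from a vanishing conditional expectation of a nonnegative integrand to a.e.\ vanishing, which the paper leaves implicit.
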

\begin{proof}
 By Corollary \ref{cor:chainrulesolution},
\begin{eqnarray*}
 && J(t,y)\\&=&E[\int_t^T  L(X(r)+ D^-_yJ(r,y(r)))_+dr - \int_t^T D^-_yJ(r,y(r))u(r)dr |\mathcal{F}_t] \\
 &=& E[\int_t^T X(r)u(r)dr |\mathcal{F}_t] \\ &&+
E[\int_t^T  \left( L(X(r)+ D^-_yJ(r,y(r)))_+ - (X(r) + D^-_yJ(r,y(r)))u(r) \right) dr |\mathcal{F}_t]
\end{eqnarray*}
Hence, $u$ is optimal, if and only if the nonnegative second term on
the right-hand side vanishes, which is equivalent to
(\ref{eq:optimality}).
\end{proof}

\section{A dual formulation}

We finally present a dual representation in terms of martingales. This type of representation was first suggested by \citet{Ro} and \citet{HK} for optimal stopping problems. A corresponding result
for general discrete time optimal control problems is due to \citet{BSS}.

The main idea is to relax the adaptedness condition on the set of
controls and to penalize non-adapted controls by a suitable choice
of martingales. We first introduce the set $\mathfrak{U}(t,y)$ of
deterministic functions $\mathfrak{u}:[t,T]\rightarrow [0,L]$ such
that   $\int_t^T \mathfrak{u}(s)ds\leq 1-y$. With this notation,
$U(t,y)$ is the set of adapted processes whose paths take values in
$\mathfrak{U}(t,y)$.
\begin{defi}
 Suppose $(t,y)\in [0,T]\times (-\infty,1]$. A map
$$
M:[t,T]\times \Omega \times \mathfrak{U}(t,y) \rightarrow \mathbb{R}
$$
is called a \emph{martingale map}, if $(M(s,u), \;t\leq s \leq T)$ is a martingale for every $u\in U(t,y)$. We denote the set of martingale maps by $\mathcal{M}(t,y)$.
\end{defi}
We can now represent the value $J(t,y)$ as a \iindex{solution of }
minimization problem over martingale maps.
\begin{thm}
  Suppose $(t,y)\in [0,T]\times (-\infty,1]$. Then,
\begin{equation} \label{eq:dual}
J(t,y)=\essinf_{M\in \mathcal{M}(t,y)} E[\esssup_{\mathfrak{u}\in \mathfrak{U}(t,y)} \int_t^T \mathfrak{u}(r)X(r) dr -(M(T,\mathfrak{u})-M(t,\mathfrak{u}))|\mathcal{F}_t].
\end{equation}
Moreover,
$$
\bar M^{t,y}(s,\mathfrak{u})=J(s,\mathfrak{y}(s))+\int_t^s L(X(r)+ D^-_yJ(r,\mathfrak{y}(r)))_+ - D^-_yJ(r,\mathfrak{y}(r))\mathfrak{u}(r)dr
$$
with $\mathfrak{y}(r)=y+\int_t^r \mathfrak{u}(l)dl$ is an optimal martingale map, which satisfies
\begin{equation}
J(t,y)=\esssup_{\mathfrak{u}\in \mathfrak{U}(t,y)} \int_t^T \mathfrak{u}(r)X(r) dr -(\bar M^{t,y}(T,\mathfrak{u})-\bar M^{t,y}(t,\mathfrak{u})).
\end{equation}
\end{thm}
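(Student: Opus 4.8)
The plan is to prove the two inequalities in \eqref{eq:dual} separately. For the ``$\le$'' direction I will establish the soft weak-duality bound $J(t,y)\le E[\,\esssup_{\mathfrak u\in\mathfrak U(t,y)}\big(\int_t^T\mathfrak u(r)X(r)dr-(M(T,\mathfrak u)-M(t,\mathfrak u))\big)\,|\,\mathcal F_t]$ for \emph{every} $M\in\mathcal M(t,y)$; for the ``$\ge$'' direction I will verify that the explicit $\bar M^{t,y}$ is a martingale map along which the quantity inside the essential supremum is $P$-a.s.\ equal to the $\mathcal F_t$-measurable random variable $J(t,y)$, which simultaneously identifies $\bar M^{t,y}$ as an optimiser and yields the displayed pathwise identity. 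Integrability is never an obstruction: $|D^-_yJ|\le\sup_{r\in[0,T]}Z(r)$ by the Lipschitz estimate in Proposition~\ref{prop:goodversion}, and $E[(\sup_rZ(r))^p]<\infty$, so together with \eqref{ass:square} all the random variables appearing below are in $L^1$.

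\emph{Step 1 (weak duality).} Fix $M\in\mathcal M(t,y)$ and an adapted $u\in U(t,y)$. Since the paths of $u$ take values in $\mathfrak U(t,y)$ for $P$-a.e.\ $\omega$, a standard measure-theoretic argument (relating the essential supremum over deterministic controls to measurable, in particular adapted, selections, via a jointly measurable version of the map $(\mathfrak u,\omega)\mapsto\int_t^T\mathfrak u(r)X(r,\omega)dr-(M(T,\mathfrak u)(\omega)-M(t,\mathfrak u)(\omega))$) gives $\int_t^T u(r)X(r)dr-(M(T,u)-M(t,u))\le\esssup_{\mathfrak u\in\mathfrak U(t,y)}(\cdots)$ $P$-a.s. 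Taking $E[\,\cdot\,|\mathcal F_t]$, the left-hand side equals $E[\int_t^T u(r)X(r)dr|\mathcal F_t]$ because $M(\cdot,u)$ is a martingale; taking the essential supremum over $u\in U(t,y)$, using $\bar J(t,y)=J(t,y)$ (Proposition~\ref{prop:goodversion}), and then the essential infimum over $M$ yields $J(t,y)\le\essinf_{M\in\mathcal M(t,y)}E[\esssup_{\mathfrak u}(\cdots)|\mathcal F_t]$.

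\emph{Step 2 ($\bar M^{t,y}$ is a martingale map).} Fix an adapted $u\in U(t,y)$ and write $y(r)=y+\int_t^r u(l)dl$. The key ingredient is a ``dynamic'' version of Corollary~\ref{cor:chainrulesolution}: for every $s\in[t,T]$,
\[J(s,y(s))=E\Big[\int_s^T L(X(r)+D^-_yJ(r,y(r)))_+dr-\int_s^T D^-_yJ(r,y(r))u(r)dr\,\Big|\,\mathcal F_s\Big].\]
This is Corollary~\ref{cor:chainrulesolution} at the deterministic time $s$ with $\mathcal F_s$-measurable initial state $y(s)$ and control $u|_{[s,T]}\in U(s,y(s))$; I would deduce it by approximating $y(s)$ from below by $\mathcal F_s$-measurable simple random variables, pasting the corresponding controls over the associated partition of $\Omega$, and passing to the limit using the continuity and Lipschitz-in-$y$ properties of $J$ from Proposition~\ref{prop:goodversion} together with dominated convergence. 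Granting this, the terminal condition $J(T,\cdot)=0$ turns the definition of $\bar M^{t,y}$ into $\bar M^{t,y}(s,u)=E[N\,|\,\mathcal F_s]$ with the fixed integrable random variable $N:=\int_t^T L(X(r)+D^-_yJ(r,y(r)))_+dr-\int_t^T D^-_yJ(r,y(r))u(r)dr$; hence $\bar M^{t,y}(\cdot,u)$ is a martingale and $\bar M^{t,y}\in\mathcal M(t,y)$.

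\emph{Step 3 (pathwise identity and conclusion).} For a deterministic $\mathfrak u\in\mathfrak U(t,y)$ put $\mathfrak y(r)=y+\int_t^r\mathfrak u(l)dl$; then $\mathfrak y(T)\le1$, $\bar M^{t,y}(t,\mathfrak u)=J(t,y)$, and $J(T,\cdot)=0$ gives $\bar M^{t,y}(T,\mathfrak u)=\int_t^T L(X(r)+D^-_yJ(r,\mathfrak y(r)))_+dr-\int_t^T D^-_yJ(r,\mathfrak y(r))\mathfrak u(r)dr$, so
\[\int_t^T\mathfrak u(r)X(r)dr-(\bar M^{t,y}(T,\mathfrak u)-\bar M^{t,y}(t,\mathfrak u))=J(t,y)+\int_t^T\big(\mathfrak u(r)(X(r)+D^-_yJ(r,\mathfrak y(r)))-L(X(r)+D^-_yJ(r,\mathfrak y(r)))_+\big)dr.\]
As $0\le\mathfrak u(r)\le L$ forces $\mathfrak u(r)a\le La_+$ for every $a\in\mathbb R$, the integrand is nonpositive, so the left-hand side is $\le J(t,y)$ $P$-a.s., whence $\esssup_{\mathfrak u\in\mathfrak U(t,y)}(\cdots)\le J(t,y)$ $P$-a.s. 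Applying the Step~1 bound to $M=\bar M^{t,y}$ then gives $J(t,y)\le E[\esssup_{\mathfrak u}(\cdots)|\mathcal F_t]\le E[J(t,y)|\mathcal F_t]=J(t,y)$, so all these are equalities; thus $\bar M^{t,y}$ attains the essential infimum in \eqref{eq:dual}, and since $\esssup_{\mathfrak u}(\cdots)\le J(t,y)$ has the same $\mathcal F_t$-conditional expectation as $J(t,y)$ it equals $J(t,y)$ $P$-a.s.\ --- the final displayed identity. I expect the two nonroutine points to be (i) the dynamic upgrade of Corollary~\ref{cor:chainrulesolution} to $\mathcal F_s$-measurable initial data needed in Step~2, and (ii) the measure-theoretic step in Step~1 by which an adapted control is dominated $P$-a.s.\ by the essential supremum over deterministic controls, the latter being the more delicate.
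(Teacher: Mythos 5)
Your proof is correct and follows essentially the same route as the paper: weak duality via the martingale property of $M(\cdot,u)$ for adapted $u$, verification that $\bar M^{t,y}$ is a martingale map using Corollary~\ref{cor:chainrulesolution} at time $s$ with state $y(s)$, and the pathwise inequality $\mathfrak u(r)a\le La_+$ to get the final identity. The two subtleties you flag --- applying the chain-rule corollary with $\mathcal F_s$-measurable initial data, and the pathwise domination of an adapted control by the essential supremum over deterministic controls --- are indeed handled implicitly in the paper's own proof, and your proposed remedies (approximation from below by $\mathcal F_s$-simple variables; a jointly measurable selection argument) are reasonable ways to fill them in.
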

\begin{proof}
Suppose $u\in U(t,y)$ and $M$ is a martingale map. Then,
\begin{eqnarray*}
 && E[\int_t^T {u}(r)X(r) dr|\mathcal{F}_t]=E[\int_t^T {u}(r)X(r) dr -(M(T,{u})-M(t,{u}))|\mathcal{F}_t] \\ &\leq &
 E[\esssup_{\mathfrak{u}\in \mathfrak{U}(t,y)} \int_t^T \mathfrak{u}(r)X(r) dr -(M(T,\mathfrak{u})-M(t,\mathfrak{u}))|\mathcal{F}_t],
\end{eqnarray*}
where the first identity is due to the martingale property of $M(s,u)$. This shows
\begin{equation*}
J(t,y)\leq \essinf_{M\in \mathcal{M}(t,y)} E[\esssup_{\mathfrak{u}\in \mathfrak{U}(t,y)} \int_t^T \mathfrak{u}(r)X(r) dr -(M(T,\mathfrak{u})-M(t,\mathfrak{u}))|\mathcal{F}_t].
\end{equation*}
In order to finish the proof it is now sufficient to show that $\bar M^{t,y}$ is a martingale map and satisfies
\begin{equation}\label{eq:hilf16}
J(t,y)\geq\esssup_{\mathfrak{u}\in \mathfrak{U}(t,y)} \int_t^T \mathfrak{u}(r)X(r) dr -(\bar M^{t,y}(T,\mathfrak{u})-\bar M^{t,y}(t,\mathfrak{u})).
\end{equation}
Fix some $u\in U(t,y)$ and let $y(r)=y+\int_t^r u(l)dl$. By Corollary \ref{cor:chainrulesolution}, we get for $s\geq t$
$$
J(s,y(s))=E[\int_s^T  L(X(r)+ D^-_yJ(r,y(r)))_+dr - \int_s^T D^-_yJ(r,y(r))u(r)dr |\mathcal{F}_s].
$$
This shows that $\bar M^{t,y}$ is a martingale map. Finally, (\ref{eq:hilf16}) holds, because
\begin{eqnarray*}
 &&\left( \esssup_{\mathfrak{u}\in \mathfrak{U}(t,y)} \int_t^T \mathfrak{u}(r)X(r) dr -(\bar M^{t,y}(T,\mathfrak{u})-\bar M^{t,y}(t,\mathfrak{u}))\right)-J(t,y)
\\ &=& \esssup_{\mathfrak{u}\in \mathfrak{U}(t,y)} \int_t^T \left(\mathfrak{u}(r)(X(r)+D^-_yJ(r,\mathfrak{y}(r)))  -L(X(r)+ D^-_yJ(r,\mathfrak{y}(r)))_+ \right)dr\\
&\leq & 0.
\end{eqnarray*}

\end{proof}

\subsubsection*{Acknowledgement}
The authors gratefully acknowledge financial support by the ATN-DAAD Australia Germany Joint Research Cooperation Scheme.

\end{document}